\newtheorem{theorem}{Theorem}[section]
\newtheorem{lemma}[theorem]{Lemma}
\newtheorem{proposition}[theorem]{Proposition}
\theoremstyle{definition}
\newtheorem{definition}{Definition}[section]
    \newtheoremstyle{TheoremNum}
        {\topsep}{\topsep}              
        {\itshape}                      
        {}                              
        {\bfseries}                     
        {.}                             
        { }                             
        {\thmname{#1}\thmnote{ \bfseries #3}}
    \theoremstyle{TheoremNum}
    \newtheorem{thmn}{Theorem}
		    \newtheoremstyle{TheoremNum}
        {\topsep}{\topsep}              
        {\itshape}                      
        {}                              
        {\bfseries}                     
        {.}                             
        { }                             
        {\thmname{#1}\thmnote{ \bfseries #3}}
    \theoremstyle{TheoremNum}
\newcommand{\RR}{\mathbb{R}}
\newcommand{\NN}{\mathbb{N}}
\newcommand{\sgn}{\operatorname{sgn}}
\newcommand{\argmax}{\operatorname{argmax}}
\newcommand{\Payoff}{\operatorname{Payoff}}
\newcommand{\err}{\operatorname{err}}
\def\al{\alpha}
\def\E{\mathbf{E}}
\title{Incentive Compatible Active Learning}
\author{Federico Echenique\thanks{%
California Institute of Technology,
{\sl fede@hss.caltech.edu}
} \and Siddharth Prasad\thanks{%
Carnegie Mellon University,
{\sl sprasad2@cs.cmu.edu}
}}
\begin{document}
\maketitle

\begin{abstract} We consider active learning under incentive compatibility constraints. The main application of our results is to economic experiments, in which a learner seeks to infer the parameters of a subject's preferences: for example their attitudes towards risk, or their beliefs over uncertain events. By cleverly adapting the experimental design, one can save on the time spent by subjects in the laboratory, or maximize the information obtained from each subject in a given laboratory session; but the resulting adaptive design raises complications due to incentive compatibility. A subject in the lab may answer questions strategically, and not truthfully, so as to steer subsequent questions in a profitable direction.

We analyze two standard economic problems: inference of preferences over risk from multiple price lists, and belief elicitation in experiments on choice over uncertainty. In the first setting, we tune a simple and fast learning algorithm to retain certain incentive compatibility properties. In the second setting, we provide an incentive compatible learning algorithm based on scoring rules with query complexity that differs from obvious methods of achieving fast learning rates only by subpolynomial factors. Thus, for these areas of application, incentive compatibility may be achieved without paying a large sample complexity price.
\end{abstract}

\section{Introduction}
\label{sec:typesetting-summary}
We study active learning under incentive compatibility constraints. Consider a learner: Alice, who seeks to elicit the parameters governing the behavior of a human subject: Bob. The chief application of our paper is to the design of laboratory experiments in economics. In such applications, Alice is an experimenter observing choices made by Bob in her laboratory. The active learning paradigm seeks to save on the number of questions posed by Alice by making the formulation of each question dependent on Bob's answers to previous questions \cite{balcan2009agnostic, dasgupta2011two}. Now, Bob may misrepresent his answers to some of Alice's questions so as to guide Alice's line of questioning in a direction that he can benefit from.

Our setting differs from standard applications of active learning in computer science, in that data are labeled by a self-interested human agent (in our story, Bob). Computer scientists have thought of active learning as applied to, for example, combinatorial chemistry, or image detection. A learner then makes queries that are always truthfully answered. In economic settings, in contrast, one must recognize the role of incentives. 

The existing literature on applications of passive learning to preference elicitation (see for example \cite{beigman2006learning,kalai2003learnability,basu2018learnability,chase2018learning}) does not have to deal with agents' incentives to manipulate the learning mechanism, but active learning does, because an agent who understands the learner's algorithm may answer strategically early on in the experiment so as to influence the questions he faces later in the experiment. 

We should emphasize that experimental orthodoxy in economics requires that subjects (such as Bob) know as much as possible about the experimental design. No deception is allowed in economic experiments. In addition, subjects' participation is almost universally incentivized: Bob gets a payoff that depends on his answers to Alice's questions. Our model relates to a long-standing interest among economists for adaptive experimental design, see \cite{el1993bayesian,ray2012bayesian,chapman2018loss,imai2016estimating}.

Consider a concrete example. Bob has a utility function $x^\sigma$ over money, so that if he faces a random amount of money $X$, his expected utility is $\E[X^\sigma]$. In other words, Bob has a utility of the ``constant relative risk aversion'' (CRRA) form, and Alice wants to learn the value of the parameter $\sigma$ --Bob's relative risk aversion coefficient.\footnote{The coefficient $\sigma$ captures Bob's willingness to assume risk. It is a parameter that economic experiments very often seek to measure, even when the experiment is ostensibly about a totally different question. Economic experimentalists  want to understand the relation between risk and their general experimental findings, so they include risk elicitation as part of the design.} A standard procedure for estimating $\sigma$ is a {\em multiple price list}.\footnote{Multiple price lists are a very common experimental design, first used by \cite{binswanger1981attitudes}, and popularized by \cite{holt2002risk} as a method to estimate $\sigma$, as described here.} 

In a multiple-price list (MPL), Alice successively asks Bob to choose between a sure payoff of $x$ dollars and a fixed lottery $L$, for example a lottery that flips a fair coin and pays $0$ dollars if the coin turns up Heads, and $1$ dollar if it turns up Tails. Alice would first ask Bob to choose between a very small amount $x$ (almost zero) and $L$. Then Alice would raise $x$ a little and ask Bob to choose again. The procedure is repeated, each time increasing the amount $x$, until reaching a number equal to, or close to 1. At some value $x$, Bob would switch from preferring the lottery to preferring the fixed amount of money. Then Alice would solve the equation 
\begin{equation}\label{eq:intro}
  x^\sigma = (1/2)0^\sigma+(1/2)1^\sigma = (1/2) 
  \end{equation} 
to find the value of $\sigma$. Now, it is important to explain how the experiment is incentivized: When the experiment is over, Alice will actually implement one of the choices made by Bob. Conventional experimental methodology dictates \cite{azrieli2018incentives} that she chooses one of the questions at random and implements it. 

A proponent of active learning will immediately remark that the MPL  design asks too many questions. Alice only needs to know the value of $x$ at which Bob is indifferent between $x$ and the lottery $L$. We can thus imagine an adaptive design, where Alice raises $x$ until Bob switches from $L$ to $x$, and stops the experiment when that happens. This design will result in strictly fewer questions than the passive (supervised) learning design.

Bob, however, understands that Alice stops raising $x$ when he declares indifference to $L$. So he will manipulate Alice into offering him values of $x$ {\em beyond} what he truly views as indifferent to $L$. Specifically, suppose that Alice raises $x$ continuously (this is a simplifying assumption; see Section~\ref{sec:MPL} for a realistic version of this design), and that if Bob declares indifference at $x$ then the last question is implemented with probability $p(x)\in (0,1)$. The function $p$ is strictly decreasing since reporting a larger value of $x$ increases the probability that a question for which Bob preferred $L$ will be implemented.

Bob's payoff from stopping at $x$ is $\pi(x;\sigma)=p(x)x^\sigma+(1-p(x))(1/2)$ because with probability $p(x)$ the last question gets implemented, so he gets the sure amount $x$, and with the complementary probability one of the other questions is implemented and Bob gets his preference for those questions, namely the lottery $L$. The expected utility of $L$ is $1/2$.

Then it is clear that Bob would like to stop at an $x$ that is strictly greater than the value at which he would truly be indifferent to $L$, the value that solves~\eqref{eq:intro}. If he stops at the true value of $x$, he gets for sure something that he values as much as $L$ (either $L$ or the amount $x$ that he values exactly as $L$). By stopping at a strictly greater $x$, he has a shot at getting a value of $x$ that he prefers over $L$.

The situation is, however, far from hopeless. Bob's optimal value of $x$ is strictly increasing in $\sigma$.\footnote{If $\pi(x;\sigma)=p(x)x^\sigma + (1-p(x))(1/2)$ and we assume that $p$ is smooth, then $\partial^2 \pi(x;\sigma)/\partial\sigma\partial x = p'(x)x^\sigma \ln(x) + (\sigma+1)p(x)x^{\sigma-1}>0$ as $x\in (0,1]$ and $p$ is decreasing. So $\pi$ is strictly supermodular. Hence the optimal $x$ is increasing in $\sigma$.} Alice can then undo Bob's strategic choice of $x$ and back out the true value of $\sigma$. (Alice's approach is common in applied econometrics, often called the ``structural'' method.)

In this paper, we prove general possibility results, to illustrate that there are many situations where active learning is consistent with incentive compatibility. In Section~\ref{sec:MPL} we shall present a formal model of multiple price lists, and show that it is possible to learn while satisfying incentive compatibility. In Section~\ref{sec:general} we discuss incentive issues in active learning in a more general sense. We present a formal notion of incentive compatible active learning in a general preference elicitation environment, and provide characterizations of the complexity of incentive compatible learning in certain ``nice'' environments.

A recent and growing body of work studies the problem of inferring models of economic choice from a learning theoretic perspective \cite{kalai2003learnability,beigman2006learning, zadimoghaddam2012efficiently,  balcan2014learning, basu2018learnability, chase2018learning, basu2019learnability}. The learnability of preference relations has also received very recent attention \cite{basu2018learnability, chase2018learning}. Our investigation takes a different angle: in attempting to model an experimental situation where subjects are asked to make choices in an interactive manner, via, e.g., a computer program, or in person, we allow the analyst complete control over the learning data. In the active learning literature, this framework is known as the \emph{membership queries} model. There is also an ongoing line of work that considers learning problems when the data provider is strategic \cite{dekel2010incentive, abernethy2015low, liu2017machine, chen2018strategyproof}. Finally, the recent work of \cite{hardt2016strategic} studies a model where an agent may (at a cost) manipulate the input to a classification algorithm. 

The membership query model closely captures an adaptive economic experiment, while in this context the more traditional learning/active learning models (e.g.\ PAC learning, stream-based active learning) seem to place unnecessary restrictions on how the analyst learns. This notion is briefly discussed in \cite{chase2018learning}, where classical learning theoretic approaches appear to give much weaker complexity guarantees than the membership queries model in learning time-dependent discounted utility preferences. \emph{For the remainder of this paper, whenever we use the phrase ``active learning'', we refer to the membership query setting -- all other forms of learning can be viewed as a special case of membership queries.} 

The other important component in modelling an economic experiment is a payment to the agent after the experiment has concluded. Experiments in economics are always {\em incentivized}, meaning that there are actual material consequences to subjects' decisions in the lab. Subjects are paid for their decisions in the experiment.  We incorporate this incentive payment into the execution of the algorithm by which the analyst chooses questions -- the analyst implements the outcome chosen by the agent in the final round of the interaction. Thus, rather than treating the payment scheme as a separate problem, we use it to demand a certain level of robustness from our learning algorithms. As we demonstrate, this precludes the analyst from running naive learning algorithms that, despite achieving good query complexities, allow the agent to strategically and dishonestly answer questions to get offered higher payoff outcomes.

Finally, the framework we introduce engenders the following natural question: is there a combinatorial measure of complexity, akin to VC dimension for PAC learning concept classes, that precisely captures the complexity of incentive compatible learning in preference environments? Our results examine certain sufficient conditions for incentive compatible learning, a potential first step towards better understanding this new and interesting learning model.

\subsubsection*{Summary of results}

We begin by discussing incentive issues in a very common experimental paradigm, that of convex budgets. We present an example to the effect that incentive problem are present and can be critical. Then we turn to the \emph{Multiple Price Lists} (MPL), another very common experimental design used to infer agents' attitudes towards risk. In MPL experiments, an agent is asked to choose between receiving various deterministic monetary amounts and participating in a lottery. The goal of the analyst is to elicit the agent's \emph{certainty equivalent}, i.e.\ the deterministic quantity at which the agent values the lottery (in our previous discussion, the certainty equivalent is the quantity that solves Equation~\eqref{eq:intro}). We analyze a simple sequential search mechanism that is used in practice -- start from the lowest possible deterministic amount and keep increasing the offer until the agent prefers it to the lottery. The analyst pays the agent by implementing the agent's decision on a randomly selected question that was asked. We show that while this mechanism is not incentive compatible, under relatively benign assumptions it satisfies a one-to-one condition where the analyst can accurately infer the agent's true certainty equivalent after learning the agent's reported certainty equivalent. We then show how a modified payment scheme that only depends on the final decision of the agent allows the analyst to do a binary search and retain incentive compatibility, giving a mechanism for learning the certainty equivalent of a strategic agent to within an error of $\varepsilon$ using $O(\log 1/\varepsilon)$ questions. 

We then turn to an abstract model of learning preference parameters/types. The idea is, as in the MPL, to induce incentive compatibility by incentivizing the payment from the last question asked of the agent. To this end, we coin a formal notion of incentive compatible (IC) learnability. A learning algorithm is simply an adaptive procedure that at each step asks the agent to choose between two outcomes. Informally, the \emph{IC learning complexity} of an algorithm is the number of rounds required to both

\begin{enumerate}
    \item Accurately learn (with high probability) the agent's type with respect to some specified metric on the type space.
    \item Ensure that (with high probability) the payment mechanism of implementing the agent's choice on the final question cannot be strategically manipulated to yield a significant payoff gain over answering questions truthfully.
\end{enumerate}

A simple structural condition allows a strong notion of incentive compatibility to be achieved via a deterministic exhaustive search (truthful reporting is the agent's unique best response), and we give examples of commonly studied economic preference models that fit our condition. We demonstrate that a large class of preference relations over Euclidean space -- those exhibiting strict convexity under a condition which we call \emph{hyperplane uniqueness} (detailed in Section~\ref{sec:general}) -- can be learned in an incentive compatible manner. 

\begin{theorem}[informal]\label{thm:strictly_convex}
Let $\Theta$ be a type space such that the preferences induced by each $\theta\in\Theta$ are continuous, strictly convex, and satisfy hyperplane uniqueness. Then, $\Theta$ is IC learnable, under a suitably chosen metric.
\end{theorem}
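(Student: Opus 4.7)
The plan is to exploit two ideas already highlighted in the introduction: pay the agent by implementing only the outcome chosen in the very last query, and use an exhaustive deterministic search whose structure, combined with strict convexity and hyperplane uniqueness, makes truthful reporting the unique best response.

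First, I would fix a suitable metric $d$ on $\Theta$: something like a pseudometric measuring the ``mass'' of outcome pairs on which two types disagree, or equivalently a Hausdorff-type distance between the preference relations induced by two types. The target is, given tolerance $\varepsilon$, to output $\hat\theta \in \Theta$ with $d(\hat\theta, \theta) \leq \varepsilon$ while making any non-truthful play yield strictly lower expected utility.

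Second, I would build a finite $\varepsilon$-net $\Theta_\varepsilon$ of $\Theta$ under $d$, and for each pair of distinct net elements $\theta \neq \theta'$ use hyperplane uniqueness to extract a \emph{separating query}: a pair of outcomes on which $\theta$ and $\theta'$ disagree with strict preference on their respective sides. I read hyperplane uniqueness (to be verified against the formal definition in Section~\ref{sec:general}) as saying that a supporting hyperplane of an indifference surface at any point pins down the type, which together with strict convexity guarantees such separating pairs exist between any two distinct net representatives. The algorithm then performs a sequential elimination over $\Theta_\varepsilon$: at each step, query a separating pair between the current candidate and a surviving rival, and update the candidate from the answer. This uses at most $|\Theta_\varepsilon| - 1$ queries.

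Third, and this is where most of the work lies, I would verify that truthful reporting is the unique best response under the ``last question implemented'' payment rule. The design is to arrange the final query so that its two outcomes are the $d$-closest pair of candidates still alive, configured so that a type truthfully narrowed down to $\hat\theta$ receives its unique maximizer over this convex menu. By strict convexity, each type has a unique optimum on such a two-point comparison, so truth-telling strictly beats the alternative on the last step. Any lie at an earlier step steers the elimination toward a different $\hat\theta$ whose final menu no longer contains the true optimum of $\theta$; strict convexity then upgrades ``weakly worse'' to ``strictly worse'', closing the IC argument, while continuity provides the robustness needed for the high-probability accuracy claim.

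The main obstacle will be the third step. It is not a priori obvious that the elimination tree cannot be gamed: a strategic Bob might hope to trade off a worse report today for a better menu tomorrow. Ruling this out requires showing that hyperplane uniqueness forces every separating query to be one whose answer carries no hidden slack, i.e.\ any deviation from truthful play permanently removes outcomes strictly preferred to the ones the liar can subsequently reach. Formalizing this monotone ``no-regret'' property of the elimination tree, rather than the discretization and query-count bookkeeping, is where the heart of the proof will sit.
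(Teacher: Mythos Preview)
Your proposal has a genuine gap, and the paper's argument avoids it by taking a structurally different route.

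The paper does not build separating queries between pairs of types. Instead it constructs an assignment $s:\Theta\to S^{n-1}$ by setting $s(\theta)$ to be the (unique, by strict convexity) maximizer of $u(\theta,\cdot)$ over the unit sphere; hyperplane uniqueness is used only to check that $s$ is injective. The metric is then simply $d(\theta,\theta')=\|s(\theta)-s(\theta')\|$. The algorithm is a tournament over a sufficiently fine cover $K$ of $S^{n-1}$: each query is ``which of these two points of $K$ do you prefer?'', the running champion is kept, and the agent is paid the final champion. Incentive compatibility is then immediate and requires no elimination-tree analysis at all: the agent is literally being asked to select his favorite outcome from a fixed finite menu via pairwise comparisons, and the payment is that favorite, so any lie at any round can only replace the eventual payment by a weakly less preferred point of $K$. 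The learning guarantee comes from a short continuity argument showing that for $K$ fine enough the most-preferred point in $K$ lies within $\varepsilon$ of $s(\theta)$.

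By contrast, your separating-query scheme decouples the informational content of a query from its payoff content: a pair $(o,o')$ chosen to distinguish $\theta$ from $\theta'$ need not have either coordinate close to the agent's global optimum, so there is no structural reason the truthful branch of the elimination tree terminates at a better final menu than a strategic branch. You correctly identify this ``no-regret'' property as the crux, but you give no mechanism for proving it, and in general it will fail: one can arrange separating pairs so that lying early steers the tree toward a later query both of whose outcomes dominate anything reachable along the truthful path. The paper's device of querying only among candidate \emph{payments} $s(\theta_i)$, rather than among arbitrary separating outcomes, is exactly what collapses this difficulty.
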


However, this strong notion of incentive compatibility comes at a cost -- the associated IC learning complexity can be massive (exponential in the preference parameters). In the abstract setting of preferences over outcomes, it is unclear how to obtain a tangible improvement in this complexity (even with randomization), and specifically it would appear that the problem parameters (e.g. the outcome space, the set of possible agent types) require much more structure for any sort of improvement. 

We then analyze the specific setting of learning the beliefs of an expected utility agent, where we have the required structure. Here, an agent holds a belief represented by a distribution $\alpha\in\Delta_n$ (there are $n$ uncertain states of the world, and $\alpha_i$ is the probability with which the agent believes state $i$ will occur), and is asked to make choices between vectors of rewards $x\in\RR^n$, where the utility an agent of type $\alpha$ enjoys from $x$ is simply $\alpha.x$. We first observe that naive learning algorithms can vastly beat the learning complexity of the general preference framework, but fail to be incentive compatible. Our main result is an incentive compatible learning algorithm for eliciting an agent's beliefs that significantly improves upon the complexity in the general framework, and only differs from the fast naive learning algorithms by subpolynomial factors.

\begin{theorem}\label{thm:eu}
There is an algorithm for learning the belief of an expected utility agent that when run for $$O\left(n^{3/2}\log n\max\left(\log\frac{n}{\varepsilon}, \log\frac{1}{\tau}\right)\right)$$ rounds (with high probability) cannot be manipulated to yield more than a $\tau$ increase in payoff, and learns a truthful agent's belief to within total variation distance of $\varepsilon$. \footnote{Typical supervised learning bounds have a logarithmic dependence on the confidence parameter $\frac{1}{\delta}$, and so for the sake of brevity we omit terms depending on $\delta$ in our complexity bounds.}
\end{theorem}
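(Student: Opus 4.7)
I would build the algorithm around the quadratic (Brier) proper scoring rule, converting its continuous ``reporting'' form into a sequence of binary comparisons whose answers localize $\alpha$ on the simplex while the final-round implementation retains strict properness.

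\emph{Scoring-rule embedding.} For each candidate belief $p \in \Delta_n$ define the reward vector $v_p \in \RR^n$ by $v_p(i) = 2 p_i - \|p\|_2^2$. An expected utility agent with true belief $\alpha$ values $v_p$ at $\alpha \cdot v_p = \|\alpha\|_2^2 - \|\alpha - p\|_2^2$, uniquely maximized at $p = \alpha$ with value $\|\alpha\|_2^2$. A binary query ``$v_p$ versus $v_q$?'' therefore reveals the halfspace bit $\sgn\bigl(\alpha \cdot (p-q) - \tfrac{1}{2}(\|p\|_2^2 - \|q\|_2^2)\bigr)$, so the interaction becomes a cutting-plane procedure on $\Delta_n$ whose final payment happens to lie on a strictly proper scoring curve.

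\emph{Incentive compatibility from the scoring rule.} Because the payment is the realized value of the agent's chosen vector in the final round, \emph{any} strategy yields at most $\max_{p \in \Delta_n} \alpha \cdot v_p = \|\alpha\|_2^2$. If under truthful play the algorithm drives the consistent region $R_T \subseteq \Delta_n$ so that the final pair $(p^\ast,q^\ast)$ both lie within Euclidean distance $\delta$ of $\alpha$, then truthful play secures payoff at least $\|\alpha\|_2^2 - \delta^2$, by the quadratic structure of the score. Hence the gain from any manipulation is bounded by $\delta^2$, and it suffices to take $\delta \leq \sqrt{\tau}$ to meet the incentive guarantee, independently of which deviation the agent contemplates.

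\emph{Localization rate on the simplex.} The remaining task is to choose the query pairs $(p,q)$ so that the TV-diameter of $R_t$ shrinks geometrically. A bisection-style scheme tailored to the simplex (coordinate-aligned binary searches combined with randomized recentering cuts) reduces the TV-diameter by a constant factor every $O(n^{3/2} \log n)$ rounds, with the extra $\sqrt{n}\log n$ overhead arising from (i) the conversion between Euclidean and TV/$\ell_1$ distance on $\Delta_n$, and (ii) a union bound over the $n$ coordinates. A standard concentration argument over rounds supplies the ``with high probability'' qualifier. After $O(n^{3/2} \log n \log(1/\delta))$ rounds, $R_T$ has TV-diameter at most $\delta$, and taking $\delta = \min(\varepsilon,\sqrt{\tau})$ produces the claimed bound of $O(n^{3/2}\log n \max(\log(n/\varepsilon),\log(1/\tau)))$.

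\emph{Main obstacle.} The subtle step is establishing the localization rate: admissible cutting hyperplanes are restricted to normals of the form $p-q$ with $p,q\in\Delta_n$, and progress must be measured in total variation rather than in the more natural $\ell_2$. Closing the gap between the information-theoretic lower bound of $\Omega(n\log(1/\varepsilon))$ and what is achievable under this restricted cut class is precisely what forces the $\sqrt{n}\log n$ overhead, and is what makes the final complexity ``only subpolynomially'' worse than the non-incentive-compatible naive learners alluded to in the theorem.
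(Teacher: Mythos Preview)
Your incentive-compatibility skeleton is essentially the paper's: embed each binary query via a strictly proper scoring rule, ensure the algorithm always queries score-vectors of points in the current consistent region, and observe that under truthful play this region (hence both final options) collapses to a $\delta$-ball around $\alpha$, so truthful payoff is within $O(\delta^2)$ of the score's maximum and no deviation can gain more. The paper uses the spherical rule $\rho(\alpha)=\alpha/\|\alpha\|$ rather than Brier; this is not cosmetic, because with the spherical embedding a comparison $(\rho(\beta^1),\rho(\beta^2))$ reveals $\sgn(v\cdot\alpha)$ for a \emph{homogeneous} hyperplane normal $v=\rho(\beta^1)-\rho(\beta^2)$, which is what the learning analysis needs. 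Your Brier queries reveal affine bits $\sgn\bigl(\alpha\cdot(p-q)-\tfrac12(\|p\|^2-\|q\|^2)\bigr)$ unless you force $\|p\|=\|q\|$, and you do not address this.

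The genuine gap is the localization rate. You assert that a ``bisection-style scheme \ldots\ reduces the TV-diameter by a constant factor every $O(n^{3/2}\log n)$ rounds,'' but you give no algorithm and no argument; your accounting for the $\sqrt{n}\log n$ overhead (Euclidean-to-TV conversion plus a union bound over coordinates) is reverse-engineered from the target, not derived. In the paper this factor has a specific origin: the random homogeneous cuts are analyzed via disagreement-based active learning, and $n^{3/2}\log n$ is (VC dimension $\Theta(n)$ of linear separators) $\times$ (disagreement coefficient $O(\sqrt n)$ for homogeneous separators under the uniform distribution on $S^{n-1}$) $\times\log n$. None of this machinery appears in your proposal, and ``coordinate-aligned binary searches with randomized recentering'' does not obviously yield this rate---indeed, coordinate-wise binary search is exactly the non-IC scheme the paper rules out. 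You have correctly identified the hard step but not taken it.
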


Our algorithm is built upon disagreement based active learning methods that provide learning guarantees, and employs the spherical scoring rule to ensure incentive compatibility properties.

\section{Example: Convex budgets}
We present a simple example to illustrate how incentive issues can prevent a very popular experimental design from being implementable in an active learning setting.

Consider an experiment on choice under uncertainty, with an adaptive ``convex budgets'' design. Such designs are ubiquitous in experimental economics: see \cite{andreoni2002,choi2007,ahn2014,friedman2018varieties,andreoni2003,augenblick2013} among (many) others. Convex budgets is very popular as a design because it parallels the most basic model in economic theory, the model of consumer choice.\footnote{Consumer choice is probably the first model a student of economics is ever exposed to. It captures optimal choice from an economic budget sets, defined from linear prices and a maximum expenditure level.}

Bob, a subject in the lab, has expected utility preferences. Specifically, suppose that the experiment involves two possible states of the world, and that Bob chooses among vectors $x=(x_1,x_2)\in\RR^2_+$. If Bob chooses the vector $(x_1,x_2)$ and the state of the world turns out to be $i$, then he is paid $x_i$. Bob believes that the state of the world $i$ occurs with probability $\al_i$, so his expected utility from choosing $x$ is $\al_1 x_1 + \al_2 x_2$ (we assume for simplicity that Bob is risk-neutral).

The experiment seeks to learn the subjects' beliefs $\al$ with a design that has Bob choosing \[ x\in B(p,I) = \{y\in\RR^2_+ : p\cdot y\leq I \},\] at prices $p\in\RR^2_+$ and income $I>0$. The problem is equivalent to learning the ratio $\al_1/\al_2$.
It is obviously optimal for Bob to choose $x=(1/p_1,0)$ if $\al_1/p_1>\al/p_2$ and  $x=(0,1/p_2)$ if $\al_1/p_1<\al/p_2$.

The experimental design presents the subject with a sequence of prices $p$ and incomes $I$, and asks him to choose from $B(p,I)$. Usually only one of the choice problems in the sequence will actually be paid off. It is standard practice in experimental economics to pay out only one of the questions posed to a subject. For the purpose of this example, imagine that the sequence has a length of 2: $(p^1,I^1)$ and $(p^2,I^2)$. Moreover, suppose (again for simplicity) that incomes and prices are such that $I^t = p^t\cdot (1/2,1/2) = 1$, for $t=1,2$. 

Fix the first price at $p^1=(1,1)$. If Alice,  the experimenter, observes a choice of $(1/p_1,0)$ she should conclude that $\al_1/\al_2 > p^1_1/p^1_2$. And given such an inference, it would not make sense to set the second set of prices so that  $p^2_1/p^2_2 < p^1_1/p^1_2$. Alice, following an active learning paradigm of adaptive experimental design, should adjust $p_1/p_2$ upwards. So let us assume that she decides to adjust the ratio $p^1_1/p^1_2$ by a factor of $2$ {\em in the direction in which there is something to learn:} If the choice from $B(p^1,I^1)$ is $(1/p^1_1,0)$, Alice will set $p^2_1/p^2_2 = 2 (p^1_1/p^1_2)$. If the choice is $(0,1/p^1_2)$, she will set $p^2_1/p^2_2 = (1/2)(p^1_1/p^1_2)$. 

Now consider the problem facing our subject, Bob. Suppose that Bob's beliefs are such that  $\al_1<\al_2$, and, to make our calculations simpler, that $\al_1/\al_2\leq 1/2$. If he chooses ``truthfully''  according to his beliefs, he would choose $x=(0,1/p^1_1) = (0,1)$ from $B(p^1,I^1)$ and thus face prices $p^2=(2/3,4/3)$. This means that the relative price of payoffs in state $2$ increase, the state that Bob values the most because he thinks it is the most likely to occur. If instead Bob ``manipulates'' the experiment by choosing $x=(1/p_1,0)$, he will face prices $p^2=(4/3,2/3)$. It is obvious that Bob is better off in the second choice problem from facing the second budget because he will be able to afford a much larger payoff in state $2$.  If Alice only incentivizes (pays out)  the choice from $B(p_2,I_2)$, then Bob is always better off by misrepresenting his choice from the first budget.

If, instead, Alice incentivizes the experiment by implementing one of the choices made by Bob at random (a common practice in economic experiments, see \cite{azrieli2018incentives} for a formal justification), then the utility from truthtelling is  $(1/2)\al_2(1+3/4)=\al_2(7/8)$. The utility from manipulation is $(1/2)(\al_1 + \al_2(3/2))$. As long as $\al_1/\al_2 \in (1/4,1/2)$, the manipulation yields a higher utility than truth telling. 

The convex budgets example illustrates the perils of active learning as a guide to adaptive experimental design, when human subjects understand how the experiment unfolds conditional on how they make choices. The main result of our paper (see Section~\ref{sec:eumodel}) considers belief elicitation, but proposes an active learning algorithm that is based on pairwise comparisons, not choices from convex budgets.

\section{Multiple Price Lists}\label{sec:MPL}

We begin by formally considering the application in the introduction: the use of \emph{Multiple Price Lists} (MPL) to elicit an agent's preferences over risk. MPL was first proposed by \cite{binswanger1981attitudes}, and popularized by \cite{holt2002risk}, who used it to estimate risk attitudes along the lines of the discussion in the sequel. 

We shall consider a version of MPL where a lottery with monetary outcomes is fixed, and an agent chooses between a sure (deterministic) monetary payment $x$ or the lottery. More specifically, consider a lottery where a coin is flipped and if the outcome is heads, the payoff is $\overline{x}$ dollars, while if the outcome is tails, the payoff is $\underline{x}$ dollars. An analyst wants to assess an agent's willingness to participate in the lottery when presented with various deterministic alternatives. Denote this lottery by $L$.

At every round of the experiment, the analyst asks the agent to choose between a deterministic payoff of $x$ or participation in the lottery, and aims to learn the agent's \emph{certainty equivalent}: the deterministic amount that yields indifference. Conventionally (for example, see \cite{holt2002risk,andersen2006elicitation}), the experiment is run by presenting the agent with a list of $n$ pairs $(x_i,L)$. The agent makes a choice from each pair, either the sure amount $x_i$ or the lottery $L$. Then the experimenter draws one of the $n$ questions at random an pays the agent according to the decision he made for that question. (i.e. if he preferred the deterministic amount $x$, he is paid $x$, and otherwise gets to participate in the lottery). We now present a formal model of the MPL experimental design and analyze issues of incentive compatibility.

\subsection{The model}\label{sec:model}
We consider a lottery $L$ with a low outcome $\underline{x}$ and a high outcome $\overline{x}$, $\underline{x} < \overline{x}$. The lottery can operate in any number of ways, for example, by a coin flip. The analyst chooses a discretization $\underline{x} = x_0 < x_1,\ldots < x_{n-1} < x_{n} = \overline{x}$ of the interval $I = [\underline{x}, \overline{x}]$ such that the intervals $I_k = (x_k, x_{k+1}]$ all have equal length $\ell = \frac{\overline{x}-\underline{x}}{n}$. This discretization of $I$ represents the deterministic amounts that the analyst will offer to the agent.

An agent's certainty equivalent is the point $x\in (\underline{x}, \overline{x})$ such that he is indifferent between receiving $x$ versus participating in the lottery.  Certainty equivalents will be uniquely determined by an agent's utility over money, as long as his utility function is strictly increasing. 

For example, if an agent values money according to $u:\RR\to\RR$, his certainty equivalent (assuming that $L$ is a coin-flip) would be the point $x\in (\underline{x}, \overline{x})$ such that $u(x) = \frac{1}{2}u(\underline{x}) + \frac{1}{2}u(\overline{x})$. In our model, we consider agents whose utility functions belong to a given family $\mathcal{U}$ of functions such that a given certainty equivalent uniquely determines the utility function of the agent, and vice-versa. For example, if $\mathcal{U} = \{x\mapsto x^{\sigma} : 0 < \sigma < 1\}$, so utilities take the CRRA form we discussed in the introduction, then  $\sigma$ uniquely determines the point $x$ such that $x^{\sigma} = \frac{1}{2}\underline{x}^{\sigma} + \frac{1}{2}\overline{x}^{\sigma}$.

\subsection{Sequential Search}

We first consider a simple mechanism that aims to find the agent's certainty equivalent by performing a sequential search on $x_1,\ldots, x_{n}$. On round $t$ of the experiment, the agent chooses between the lottery and a deterministic payoff of $x_t$. If he chooses the lottery, the experiment continues, and if he chooses $x_t$ or claims indifference, the experiment terminates. If the experiment terminates at round $T$, the analyst can conclude that the agent reported a certainty equivalent lying in the interval $(x_{T-1}, x_T]$.

The goal of the analyst is to make a payment to the agent at the end of the experiment such that the agent is incentivized to answer questions according to his true certainty equivalent. We analyze a common scheme used in experiments: if the experiment terminates after $T$ rounds, choose $t$ randomly from $\{1,\ldots, T\}$, and pay the agent based on his preference on the $t$th question: so if $t = T$, the agent receives $x_T$, otherwise the agent receives a payment that is the outcome of the lottery. However, as discussed in the introduction, this scheme is not incentive compatible. Indeed, if $x_T$ is the agent's true certainty equivalent, he has a profitable deviation to push the experiment to terminate at $x_{T+1}$. The agent is indifferent between receiving $x_T$ and participating in the lottery, so by declaring a certainty equivalent that is higher than $x_T$ he may possibly win an amount larger than $x_T$, and which he values strictly more than the lottery. 

We now show that under some simplifying assumptions, this kind of payment scheme can at least be implemented in a manner such that the agent's true certainty equivalent can be accurately inferred based on his report. Let $\mathcal{U}$ be a family of utility functions such that each $u\in \mathcal{U}$ satisfies an inverse Lipschitz condition with constant $K_u$: for all $x, x^{*}$, $|u(x) - u(x^*)| > K_u |x-x^*|$. Let $K = \sup_{u\in \mathcal{U}} K_u$. Finally, let $M = \sup_{u\in \mathcal{U}} (u(\overline{x}) - u(\underline{x}))$.

For $t\in\{1,\ldots, n\}$ let $p_t$ denote the probability that the agent is paid the deterministic amount $x_{t}$ if the experiment stops on round $t$ (so the agent participates in the lottery with probability $1-p_t$). An agent with true certainty equivalent $x$ and corresponding utility function $u$ has an expected payoff of $$\Payoff(x, x_t) = p_t u(x_t) + (1-p_t) u(x) $$ for reporting a certainty equivalent in $(x_{t-1}, x_t]$.

Let $r:(\underline{x}, \overline{x})\to \{x_1,\ldots, x_n\}$ be the best response of an agent with certainty equivalent $x$: $$r(x) = \argmax_{x_t} \Payoff(x, x_t).$$ We refer to $r$ as the report function.

We now show that when $p_1,\ldots, p_{n}$ satisfy $p_{t+1} < p_t$\footnote{This is true for the standard uniform randomization scheme, as $p_t = 1/t$. More generally, without this condition the agent will have incentives to report high certainty equivalents as this is not penalized by lower probabilities of winning the certain amount.} and $p_{t+1} < \frac{K\ell}{2M}p_t$, the analyst can recover the agent's true certainty equivalent up to some low error via the sequential search mechanism. 

We should emphasize that the agent will not be truthful, in the sense of reporting their true certainty equivalent. However, we are still able to back out the true certainty equivalent from understanding the agents strategic incentives. 

We proceed in steps. First, we characterize the best responses for agents with certainty equivalents belonging to $\{x_1,\ldots, x_{n-1}\}$. We show that the report function is one to one. This implies that $r(x_t) = x_{t+1}$, since $r(x_t) > x_t$, for each $t = 1,\ldots, n-1$.

\begin{proposition}
For $x_t\in\{x_1,\ldots, x_{n-1}\}$, $r(x_t) = x_{t+1}$.
\end{proposition}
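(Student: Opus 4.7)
The plan is to directly compare the payoff of reporting $x_{t+1}$ against reporting any other $x_s$ for $s \neq t+1$, when the true certainty equivalent is $x = x_t$. Rewriting,
\[
\Payoff(x_t, x_s) \;=\; u(x_t) + p_s\bigl(u(x_s) - u(x_t)\bigr),
\]
so the claim $r(x_t) = x_{t+1}$ reduces to showing
\[
p_{t+1}\bigl(u(x_{t+1}) - u(x_t)\bigr) \;>\; p_s\bigl(u(x_s) - u(x_t)\bigr)
\qquad\text{for every } s \in \{1,\dots,n\}\setminus\{t+1\}.
\]

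I would split the argument into two cases. For $s \le t$, strict monotonicity of $u$ gives $u(x_s) \le u(x_t)$, so the right-hand side is non-positive while the left-hand side is strictly positive (since $p_{t+1}>0$ and $u(x_{t+1})>u(x_t)$). For $s \ge t+2$, I would iterate the hypothesis $p_{k+1} < \frac{K\ell}{2M}\,p_k$ to obtain
\[
p_s \;<\; \left(\frac{K\ell}{2M}\right)^{\!s-t-1} p_{t+1} \;\le\; \frac{K\ell}{2M}\,p_{t+1},
\]
(the ratio $K\ell/(2M)$ is at most $1$ in the regime of interest, since otherwise $p_t$ could not remain a probability). Combined with the crude upper bound $u(x_s) - u(x_t) \le u(\overline{x}) - u(\underline{x}) \le M$, this yields
\[
p_s\bigl(u(x_s) - u(x_t)\bigr) \;<\; \frac{K\ell}{2}\,p_{t+1}.
\]
On the other hand, the inverse Lipschitz condition applied to $x_{t+1}-x_t = \ell$ gives $u(x_{t+1})-u(x_t) > K\ell$, so
\[
p_{t+1}\bigl(u(x_{t+1}) - u(x_t)\bigr) \;>\; K\ell\,p_{t+1} \;>\; \frac{K\ell}{2}\,p_{t+1},
\]
which closes the comparison.

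The main obstacle, such as it is, is organizing the telescoping use of $p_{k+1} < \frac{K\ell}{2M}\,p_k$ to control $p_s$ uniformly in $s$ by $p_{t+1}$, and then pairing this geometric decay of the probabilities with the linear lower bound on the utility gap $u(x_{t+1})-u(x_t)$ coming from inverse Lipschitzness. A minor subtlety is the role of $K$ versus $K_u$: the argument really uses $u(x_{t+1})-u(x_t) > K\ell$, which requires $K$ to serve as a lower bound on $K_u$ for the agent's utility function; I would interpret the definition of $K$ accordingly (or replace $K$ by $K_u$ throughout, noting that the hypothesis on $p_{t+1}/p_t$ then needs to hold for the relevant $u$).
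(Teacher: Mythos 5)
Your proof is correct, but it takes a genuinely different route from the paper. The paper argues indirectly: it first notes that every agent over-reports (so $r(x_t) > x_t$ and in particular $r(x_{n-1}) = x_n$), then proves that $r$ is injective on $\{x_1,\ldots,x_{n-1}\}$ by adding the two best-response inequalities of two hypothetical agents with the same report, specializing to $s=t_1$, and using the ratio condition $p_{t+1} < \tfrac{K\ell}{2M}p_t$ together with the inverse Lipschitz bound to force $|x_{t_1}-x_{t_2}|<\ell$; the conclusion $r(x_t)=x_{t+1}$ then follows by a counting/pigeonhole step. You instead verify directly that for a type-$x_t$ agent, stopping at $t+1$ strictly dominates every alternative: reports $s\le t$ are killed by monotonicity of $u$, and reports $s\ge t+2$ by pairing the decay of $p_s$ with the crude bound $u(x_s)-u(x_t)\le M$ against the inverse-Lipschitz lower bound $u(x_{t+1})-u(x_t) > K\ell$. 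Your argument is more elementary and transparent, avoids the two-agent trick and the counting step, and as a bonus delivers uniqueness of the best response directly; the paper's injectivity route, by contrast, uses the inverse Lipschitz condition only to separate types and would generalize more readily to statements about the report map as a whole. Two small points. First, your parenthetical justification that $\tfrac{K\ell}{2M}\le 1$ (``otherwise $p_t$ could not remain a probability'') is not right as stated, since a ratio bound larger than $1$ would impose no tension with $p$ being a probability; but the fact itself is true (the inverse Lipschitz condition gives $M \ge K n\ell$, so the ratio is at most $\tfrac{1}{2n}$), and you can bypass it entirely by using the assumed monotonicity $p_{k+1}<p_k$ to bound $p_s\le p_{t+2}<\tfrac{K\ell}{2M}p_{t+1}$ for $s\ge t+2$, applying the ratio condition only once. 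Second, your observation about $K$ versus $K_u$ is well taken: the step $u(x_{t+1})-u(x_t) > K\ell$ needs $K$ to be a lower bound on $K_u$, and the paper's own proof has exactly the same need (its final step $K_{u_2}|x_{t_1}-x_{t_2}| < K\ell \Rightarrow |x_{t_1}-x_{t_2}|<\ell$ also requires $K\le K_{u_2}$), so reading $K$ as an infimum is the interpretation both arguments require, not a gap specific to yours.
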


\begin{proof}
Note that since $r(x_{n-1}) = x_n$, it suffices to show that $r$ is injective on $\{x_1,\ldots, x_{n-1}\}$.

Suppose $r(x_{t_1}) = r(x_{t_2})$, and without loss of generality let $t_1 \ge t_2$. Let $r(x_{t_1}) = r(x_{t_2}) = x_t$. Since any agent is incentivized to report higher than their true certainty equivalent, $t > t_1, t_2$, so in particular $t\ge t_1+1$. Let $u_1$ denote the utility function of the agent with true type $x_{t_1}$, $u_2$ that of the agent with true type $x_{t_2}$.

For any $s\neq t$ we have (since $r$ gives the best response):

$$\begin{aligned}
    p_tu_1(x_t) - p_t u_1(x_{t_1}) & > p_s u_1(x_s) - p_s u_1(x_{t_1}), \\
    p_t u_2(x_{t}) - p_t u_2(x_{t_2}) & > p_s u_2(x_s) - p_s u_2(x_{t_2}).
\end{aligned}$$

Adding the two inequalities and rearranging gives
\begin{equation}\frac{p_t}{p_s} > \frac{u_1(x_{s}) - u_1(x_{t_1}) + u_2(x_{s}) - u_2(x_{t_2})}{u_1(x_{t}) - u_1(x_{t_1}) + u_2(x_{t}) - u_2(x_{t_2})}.\label{eq1}\end{equation}

At $s = t_1$ (we know $t_1\neq t$, since $r(x_{t_1}) > x_{t_1}$), Equation \ref{eq1} simplifies to $$\frac{p_t}{p_{t_1}} > \frac{u_2(x_{t_1}) - u_2(x_{t_{2}})}{u_1(x_{t}) - u_1(x_{t_1}) + u_2(x_{t}) - u_2(x_{t_2})} > \frac{u_2(x_{t_1}) - u_2(x_{t_2})}{2M},$$ so $$u_2(x_{t_1}) - u_2(x_{t_2}) < 2M\frac{p_t}{p_{t_1}}\le 2M\frac{p_{t_1+1}}{p_{t_1}} < K\ell.$$ The inverse Lipschitz condition on $u_2$ then implies that $|x_{t_1} - x_{t_2}| < \ell$, which cannot happen unless $t_1 = t_2$.
\end{proof}

Thus, if an agent's true certainty equivalent happens to coincide with one of the points of the discretization, the agent will answer questions as if his certainty equivalent is the next point in the discretization.

For the next step, we need an additional Lipschitz type condition on utility functions. Suppose there are constants $C_1$ and $C_2$ such that for any $x, x^*\in(\underline{x}, \overline{x})$, with $u, u^*$ the corresponding utility functions, and for any $x, x''\in (\underline{x}, \overline{x})$, $$|u(x') - u^*(x'')| \le C_1|x-x^*| + C_2|x' - x''|.$$ Moreover, let $$\lambda = \inf_{x\in (\underline{x}, \overline{x})} \min_{s, t} \left|\Payoff(x, x_s) - \Payoff(x, x_t)\right|,$$ be the smallest possible deviation in payoff obtained by changing one's report.

We also require the assumption that if $x^*\ge x$, then $u^*(x')\ge u(x')$ for any $x'$, where $u^*$ and $u$ are the utility functions corresponding to certainty equivalents $x^*$ and $x$, respectively. This is an intuitive condition stating that agents with a higher certainty equivalent value money more than agents with a lower certainty equivalent (note that the CRRA utilities discussed previously satisfy this property). This in particular implies that $u^*(x^*)\ge u^*(x)\ge u(x)$, so $\Payoff(x^*, x_k)\ge\Payoff(x, x_k)$ for any $x_k$ in the discretization. We will use this in the proof of the following proposition, which establishes that $r$ satisfies a certain weak monotonicity property.

\begin{proposition} Let $x_{t-1} < x < x^* \le x_t$ with $x^* - x < \frac{\lambda}{2(2C_1+C_2)}$, and suppose $r(x^*) \le x_{t+1}$. Then, $r(x)\le x_{t+1}$.
\end{proposition}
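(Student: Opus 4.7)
The plan is a proof by contradiction. Suppose $r(x) > x_{t+1}$; since $r$ takes values in the discretization, this forces $r(x) = x_b$ for some $b \ge t+2$. Writing $x_a := r(x^*)$, we have $a \le t+1 < b$, so in particular $x_a \ne x_b$. This distinctness matters because $\lambda$ only constrains payoff differences between \emph{distinct} reports.

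The first main step is a single-crossing manipulation. Combining the two optimality conditions (from the definition of $r$ as an argmax and from the definition of $\lambda$ as the minimum payoff gap across distinct reports) yields
\begin{align*}
\Payoff(x^*, x_a) - \Payoff(x^*, x_b) &\ge \lambda, \\
\Payoff(x, x_b) - \Payoff(x, x_a) &\ge \lambda.
\end{align*}
Adding and regrouping gives
$$\bigl[\Payoff(x^*, x_a) - \Payoff(x, x_a)\bigr] - \bigl[\Payoff(x^*, x_b) - \Payoff(x, x_b)\bigr] \ge 2\lambda.$$
I would then invoke the monotonicity condition flagged right before the proposition: since $u^* \ge u$ pointwise, one has $\Payoff(x^*, x_c) \ge \Payoff(x, x_c)$ for every $c$. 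Dropping the nonnegative second bracket yields the cleaner inequality $\Payoff(x^*, x_a) - \Payoff(x, x_a) \ge 2\lambda$.

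The second main step is a matching upper bound from Lipschitz continuity. Expand
$$\Payoff(x^*, x_a) - \Payoff(x, x_a) = p_a\bigl[u^*(x_a) - u(x_a)\bigr] + (1-p_a)\bigl[u^*(x^*) - u(x)\bigr]$$
and apply the Lipschitz hypothesis termwise: $|u^*(x_a) - u(x_a)| \le C_1|x^*-x|$ and $|u^*(x^*) - u(x)| \le (C_1+C_2)|x^*-x|$. Bounding each bracket separately by its absolute value (rather than exploiting the convex-combination weights, which would save a small factor) gives the upper bound $(2C_1 + C_2)|x^*-x|$ on the payoff difference. Chaining with the earlier $2\lambda$ lower bound forces $|x^*-x| \ge 2\lambda/(2C_1+C_2)$, which directly contradicts the standing hypothesis $|x^*-x| < \lambda/(2(2C_1+C_2))$. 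The only real obstacle is keeping track of the constants; the argument itself is just the supermodular ``add two opposed optimality inequalities'' trick combined with Lipschitz estimation, and the monotonicity assumption is precisely what frees us from carrying absolute values on both cross-type payoff differences.
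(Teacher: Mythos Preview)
Your proof is correct and uses the same three ingredients as the paper: the Lipschitz bound on the cross-type payoff difference $\Payoff(x^*,x_k)-\Payoff(x,x_k)\le(2C_1+C_2)|x^*-x|$, the $\lambda$-gap from optimality, and the pointwise monotonicity $\Payoff(x^*,\cdot)\ge\Payoff(x,\cdot)$. The packaging differs slightly: the paper argues directly, fixes $r(x^*)\in\{x_t,x_{t+1}\}$ by cases, and for any $s>t+1$ chains $\Payoff(x,x_{t+1})\ge\Payoff(x^*,x_{t+1})-\tfrac{\lambda}{2}\ge\Payoff(x^*,x_s)+\tfrac{\lambda}{2}\ge\Payoff(x,x_s)$ to conclude $r(x)\le x_{t+1}$; you instead assume $r(x)=x_b$ with $b\ge t+2$, invoke the optimality condition for \emph{both} types, and add them, which yields the stronger lower bound $2\lambda$ and obviates the case split on $r(x^*)$.
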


\begin{proof}

Let $u, u^*$ be the utility functions corresponding to certainty equivalents $x$ and $x^*$, respectively. We first bound the increase in payoff an agent of type $x^*$ experiences over an agent of type $x$ for making the same report. For any $x_k$, we have
$$\begin{aligned}
\Payoff(x^*, x_k)-\Payoff(x, x_k) &= p_k (u^*(x_k) - u(x_k)) - p_k  (u^*(x^*) - u(x)) + (u^*(x^*) - u(x)) \\
& < p_k(u^*(x_k) - u(x_k)) + (u^*(x^*) - u(x)) \\
&< (u^*(x_k) - u(x_k)) + (u^*(x^*) - u(x)) \\
&\le C_1\ell + (C_1+C_2)\ell \\
&\le \frac{\lambda}{2}
\end{aligned}$$

As $r(x^*)\le x_{t+1}$, either $r(x^*) = x_{t+1}$ or $r(x^*) = x_{t}$. Suppose $r(x^*) = x_{t+1}$. We show that an agent of type $x$ cannot increase his payoff by reporting above $x_{t+1}$. Let $s > t+1$.

Plugging $x_{t+1}$ into the above bound gives $$\Payoff(x^*, x_{t+1})\le\frac{\lambda}{2} + \Payoff(x, x_{t+1}),$$ and the definition of $\lambda$ gives that $$\Payoff(x^*, x_{t+1}) \ge \Payoff(x^*, x_s) + \lambda.$$ Combining the two inequalities yields $$\begin{aligned}\Payoff(x, x_{t+1})\ge \Payoff(x^*, x_{t+1}) - \frac{\lambda}{2}  &>\Payoff(x^*, x_s) + \frac{\lambda}{2} \\
&> \Payoff(x^*, x_s) \\
&> \Payoff(x, x_s),\end{aligned}$$ so $r(x)\le x_{t+1}$.


In the case that $r(x^*) = x_t$, we similarly get $r(x)\le x_t$.
\end{proof}

We can then repeatedly apply this proposition starting with $r(x_t) = x_{t+1}$ to conclude that for any $x_{t-1} < x \le x_t$, we have $r(x)\le x_{t+1}$.

Putting things together, we get:

\begin{theorem}\label{thm:back-out} If $r(x) = x_{t+1}$, then $x_{t-1} < x < x_{t+1}$. \end{theorem}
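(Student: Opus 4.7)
My plan is to split the conclusion $x_{t-1} < x < x_{t+1}$ into an upper bound and a lower bound and argue each separately, using only the tools already set up in this section. For the upper bound $x < x_{t+1}$, I will invoke the basic incentive observation that underpins Proposition 1: every agent strictly prefers to stop above their true certainty equivalent. For any $x \in (\underline{x}, \overline{x})$ with corresponding utility $u$, the defining property of the certainty equivalent is $u(x) = \E[u(L)]$, so
$$\Payoff(x, x_k) = p_k u(x_k) + (1-p_k) u(x)$$
strictly exceeds $u(x)$ exactly when $x_k > x$ (using $p_k > 0$ and strict monotonicity of $u$). Consequently $r(x) > x$ always holds, and $r(x) = x_{t+1}$ immediately yields $x < x_{t+1}$.

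For the lower bound $x > x_{t-1}$, I will argue by contradiction. Suppose $x \le x_{t-1}$; since $x > x_0 = \underline{x}$, there is a unique index $s$ with $1 \le s \le t-1$ such that $x \in (x_{s-1}, x_s]$. The remark immediately following Proposition 2 observes that iterating Proposition 2 -- stepping down from $x_s$ in increments smaller than $\frac{\lambda}{2(2C_1 + C_2)}$, with $r(x_s) = x_{s+1}$ from Proposition 1 as the base case -- yields $r(y) \le x_{s+1}$ for every $y \in (x_{s-1}, x_s]$. Applying this to our $x$ gives $r(x) \le x_{s+1} \le x_t < x_{t+1}$, contradicting $r(x) = x_{t+1}$.

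The only remaining subtlety is the edge case $t = 1$: there $x_{t-1} = \underline{x}$, and the lower bound holds automatically since $x \in (\underline{x}, \overline{x})$ by definition of a certainty equivalent, while the upper bound argument carries over unchanged. A second possible worry -- whether Proposition 2 needs to be chained across distinct intervals -- never materializes, because the iteration stays entirely within the single interval $(x_{s-1}, x_s]$ containing the hypothetical $x$, and only the inequality $s+1 \le t$ is needed to reach a contradiction. I do not anticipate a genuine obstacle: both bounds follow mechanically from what has already been established, and the theorem is essentially a bookkeeping corollary of Propositions 1 and 2.
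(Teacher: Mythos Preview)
Your proposal is correct and follows exactly the route the paper intends: the paper's own proof is just the one-line ``Putting things together, we get'' after the remark that iterating Proposition~2 from the base case $r(x_s)=x_{s+1}$ (Proposition~1) yields $r(y)\le x_{s+1}$ for all $y\in(x_{s-1},x_s]$. Your write-up simply makes explicit the two halves---$r(x)>x$ for the upper bound, and the contrapositive of the iterated Proposition~2 bound for the lower bound---together with the harmless $t=1$ edge case, so there is nothing to add or correct.
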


Thus, to learn the agent's true certainty equivalent to within $\varepsilon$-error, the analyst chooses a discretization with $\frac{\overline{x}-\underline{x}}{n}\le\frac{\varepsilon}{2}$, and runs a sequential search over the discretization. The number of questions the analyst asks is $O(\frac{1}{\varepsilon})$. 

Of course, to lower the number of questions asked, the analyst could instead perform a binary search. It is easy to see that, like in the sequential search mechanism, simply implementing a uniformly random question is not incentive compatible. For example, consider a discretization with deterministic amounts $x_1,\ldots, x_{7}$, and consider an agent with true certainty equivalent at $x_3$. For simplicity, we assume that if when presented with $(x_i, L)$ the agent is indifferent between $x_i$ and $L$, he chooses $x_i$. If the agent answers truthfully, the pairs offered by a binary search would be $(x_4, L)$, $(x_2, L)$, and $(x_3, L)$, and his choices would have been $x_4$, $L$, and $x_3$, respectively. The agent's expected payoff is $(1/3)u(x_4) + (1/3)u(L) + (1/3)u(x_3) = (1/3)u(x_4) + (2/3)u(L)$. Suppose instead the agent answers as if his true certainty equivalent is $x_5$. Then, the pairs he gets offered would be $(x_4, L)$, $(x_6, L)$, and $(x_5, L)$, and his choices would have been $L$, $x_6$, and $x_5$, respectively. His expected payoff is then $(1/3)u(L) + (1/3)u(x_6) + (1/3)u(x_5)$, which is clearly a profitable deviation.

It is unclear if this scheme can be directly modified to satisfy incentive compatibility properties, but since the payments in the sequential search mechanism only depended on the last question asked, we can use the same payment scheme here so that Theorem \ref{thm:back-out} holds. So now the analyst can learn the agent's certainty equivalent to within an error of $\varepsilon$ with $O(\log 1/\varepsilon)$ questions.

\section{General Preference Elicitation}\label{sec:general}

Our discussion so far has focused on a specific, albeit ubiquitous, preference elicitation environment. In the rest of the paper we introduce a general model of incentive compatible active learning. We introduce the idea of incentive compatible query complexity: the sample size that guarantees some learning objective while maintaining incentive compatibility. 

The main application of our tools will be to expected utility theory. We shall introduce a learning algorithm that is incentive compatible for learning the beliefs of an agent that has expected utility preferences. 

We focus on learning an agent's preferences. The agent will be modeled as having a utility function parameterized by some type, which generates the agent's choices, that the learner wishes to infer. To this end,  $\Theta$ is a type space equipped with a metric $d:\Theta\times\Theta\to\RR_{\ge 0}$ that is bounded with respect to $d$. $\mathcal{O}$ is the space of possible outcomes. An agent of type $\theta\in\Theta$ has utility  $u(\theta, o)$ if the outcome is $o\in\mathcal{O}$. $\theta$ induces a preference relation $\succsim$ over $\mathcal{O}$ defined by $o\succsim o'\iff u(\theta, o)\ge u(\theta, o')$.

An analyst aims to learn the agent's type by asking him to make a sequence of choices between pairs of outcomes.\footnote{One can imagine many other protocols for learning. We constrain ourselves to protocols that are based on a sequence of pairwise comparisons. Such protocols are common in practice, and are the obvious empirical counterpart to the decision theory literature in economics and statistics. This stands in contrast with the literature on scoring rules, which allows for richer message spaces.} The agent makes choices among the pairs presented to him. 

The agent's choices can be thought of as the result of a strategy. Formally, a strategy $\sigma$ is a mapping $$\sigma:\bigcup_{t} \{((o_1, o_1'), \mathbf{1}_{o_1\succsim o_1'}),\ldots, ((o_t, o_t'), \mathbf{1}_{o_t\succsim o_t'}), (o_{t+1}, o_{t+1'})\}\to\Delta\{0, 1\}$$ that dictates a (potentially randomized) response for every possible history of the interaction up to any given time. Let $\Sigma$ denote the collection of all possible consistent strategies (a strategy is consistent if its outputs up to any given time are consistent with some preference relation in the type space).

For any strategy $\sigma$, let $\hat{\sigma}$ denote an oracle with memory that responds to queries of the form ``is $o$ preferred to $o'$?'' according to $\sigma$ given the history of previous queries made so far. Let $\hat{\Sigma} = \{\hat{\sigma} : \sigma\in\Sigma\}$ denote the collection of oracles corresponding to all possible strategies. For a type $\theta\in\Theta$, let $\hat{\theta}\in\hat{\Sigma}$ denote the oracle that responds truthfully according to $\theta$ (i.e. on query $(o, o')$ it returns $\mathbf{1}_{u(\theta, o)\ge u(\theta, o')}$).

We imagine the oracle playing the role of the agent: in an interaction with the analyst, an agent of true type $\theta$ chooses to act as an oracle for some strategy $\sigma$ (departing from standard terminology, we allow the oracle to have randomized responses).

The analyst implements a learning mechanism, which consists of the following steps: 

\begin{enumerate}
    \item Run a (potentially randomized) learning algorithm $\mathcal{A}:\hat{\Sigma}\to\Theta$ that has access to oracle $\hat{\sigma}$ and can make queries to $\hat{\sigma}$ of the form $(o, o')$ for $o, o'\in\mathcal{O}$.
    \item Arrive at a hypothesis $\theta^h\sim\mathcal{A}(\hat{\sigma})$ for the agent's type.
    \item Implement the agent's response on the last query.\footnote{Within adaptive experimental design, the idea of making a last choice on behalf of the agent is due to Ian Krajbich.}
\end{enumerate}

We now establish the notion of learnability that we work with. This definition is not concerned with issues of incentive compatibility: it is simply a refinement of the standard notion of a learning algorithm that stipulates that we learn a truthfully reported hypothesis accurately. Since in our setting the analyst has full control over the data he learns from, our requirements on the error of the algorithm are with respect to the metric $d$ on the space of types $\Theta$. 

\begin{definition}
$\mathcal{A}:\hat{\Sigma}\to\Theta$ is an $(\varepsilon, \delta)$-learning algorithm if for all $\theta\in\Theta$, $$\Pr_{\theta^h\sim\mathcal{A}(\hat{\theta})}[d(\theta, \theta^h)\le\varepsilon]\ge 1-\delta.$$ The number of queries made by $\mathcal{A}$ to the oracle $\hat{\theta}$ is the \emph{query complexity} of $\mathcal{A}$, denoted by $q(\varepsilon, \delta)$.
\end{definition}

Next, we define what it means for a learning algorithm to be incentive compatible. Intuitively, we require that if the learning algorithm is terminated on round $T$, and the analyst implements the agent's preferred outcome on the $T$th query $(o_T, o_T')$, then the agent (with high probability) cannot gain a non-negligible advantage over truthfully reporting by attempting to answer questions strategically. Let $\mathcal{A}_T(\hat{\sigma}) = (o_T, o_T')\in\Delta(\mathcal{O}\times\mathcal{O})$ denote the $T$th query to $\hat{\theta}$ made by an execution of $\mathcal{A}(\hat{\sigma})$.

\begin{definition}
$\mathcal{A}:\hat{\Sigma}\to\Theta$ is $(\tau, \nu)$-incentive compatible if there exists a $T(\tau, \nu)\in\NN$ such that for all $T\ge T(\tau, \nu)$, the following holds for any type $\theta$ and strategy $\sigma$: $$\Pr_{\substack{(o_T, p_T)\sim \mathcal{A}_T(\hat{\theta}) \\ (o_T', p_T')\sim \mathcal{A}_T(\hat{\sigma})}}[u(\theta, q_T)\ge u(\theta, q_T') - \tau]\ge 1-\nu,$$ where $q_T$ ($q_T'$) is the preferred outcome between $o_T$ and $p_T$ ($o_T'$ and $p_T'$) according to oracle $\hat{\theta}$ ($\hat{\sigma}$). The quantity $T(\tau, \nu)$ is the \emph{IC complexity} of $\mathcal{A}$.\footnote{In this definition, $(o_T, p_T)$ and $(o_T', p_T')$ are drawn from independent executions of $\mathcal{A}$.}
\end{definition}

Our goal is to design mechanisms that learn the agent's true type in an incentive compatible manner.

\begin{definition}$\mathcal{A}:\hat{\Sigma}\to\Theta$ is an $(\varepsilon, \delta, \tau, \nu)$-IC learning algorithm if it is an $(\varepsilon, \delta)$-learning algorithm that is $(\tau, \nu)$-IC. We refer to the quantity $\max(q(\varepsilon, \delta), T(\tau, \nu))$ as the \emph{IC learning complexity} of $\mathcal{A}$.\end{definition}

\subsection{An incentive compatible exhaustive search}

We first give a very simple method of achieving incentive compatible learning in the general framework introduced in Section~\ref{sec:general}. The method proceeds by exhaustively searching over the type space, and requires a simple structural assumption.  The assumption connects agents' payoffs to the distance metric used by the learner to assess learning accuracy. In a sense, this lines up the agent's incentives with the learner's objective, and makes it easy to obtain a satisfactory algorithm.

Suppose there exists a one-to-one assignment of outcomes to types $s:\Theta\to\mathcal{O}$ such that $$u(\theta, s(\theta')) > u(\theta, s(\theta''))\iff d(\theta, \theta') < d(\theta, \theta''),$$ so in particular $\theta = \argmax_{\theta'} u(\theta, s(\theta'))$. In the literature on scoring rules, $s$ is called \emph{effective} with respect to $d$ \cite{friedman1983effective}.

The following is an incentive compatible learning algorithm. Recall that an $\varepsilon$-cover of a subset $K$ of a metric space $(M, d)$ is a set of points $C$ such that for every $x\in K$, there is an $x^{*}\in C$ such that $d(x, x^*)\le\varepsilon$.

\begin{enumerate}
    \item Initialize an $\varepsilon$-cover $\{\theta_1,\ldots\}$ of $\Theta$ with respect to $d$.
    \item Initialize $\theta^h\leftarrow\theta_1$.
    \item For $t = 1$ to $T$:
    \begin{enumerate}
        \item Query $(s(\theta_{t+1}), s(\theta^h))$.
        \item If $s(\theta_{t+1})$ is preferred, $\theta^h\leftarrow\theta_{t+1}$.
    \end{enumerate}
    \item Output $\theta^h$.
    \item Pay the agent $s(\theta^h)$.
\end{enumerate}

By definition of the function $s$, allowing the algorithm to exhaustively search over all points of the cover will yield a $\theta^h$ that is the most preferred point in the cover and is also the closest point in the cover to the report. So reporting $\theta$ yields $d(\theta, \theta^h)\le\varepsilon$. Moreover, this (deterministic) algorithm satisfies $(0, 0)$-incentive compatibility for any runtime $T$, since lying at any round would simply reduce the payoff of stopping at any round. The learning complexity is the covering number $N_{\varepsilon}(\Theta)$ of the type space (which is finite as $\Theta$ is bounded).


We now present some natural preference environments in which such an assignment function can be constructed. In the following discussion, the outcome space is $\mathcal{O} = \RR^n$, and $\Theta$ is assumed to be bounded so that the search above terminates.

\begin{itemize}
    \item \emph{Euclidean preferences}. Each agent has an ``ideal point'' $\theta\in\RR^n$, and $u(\theta, x)\ge u(\theta, y)$ iff $||x-\theta||\le ||y-\theta||.$ Let $s:\RR^n\to\RR^n$ be the identity.
    \item \emph{Linear preferences}. The type of an agent is a vector $\theta\in\RR^n$ and $u(\theta, x)\ge u(\theta, y)$ iff $\theta.x\ge\theta.y$ (in order for preferences to be distinguishable we assume that no two $\theta, \theta'\in\Theta$ are scalar multiples of one another, and so for simplicity we normalize so that all types have the same length). The indifference sets of an agent of type $\theta$ are the hyperplanes $\{x : \theta.x = k\}$, for $k\in\RR$. For each $\theta$, there is a unique indifference set that is tangent to the unit $(n-1)$-sphere $S^{n-1}$. Let $s(\theta)$ be that tangent point.
\end{itemize}

Euclidean and linear preferences are characterized by natural axioms for preference relations \cite{chambers2019spherical}.

More generally, suppose the preferences of each agent are continuous and strictly convex, which we define as the upper contour sets $C(x) = \{y:y\succsim x\}$ being closed, convex, and any supporting hyperplane of $C(x)$ being unique, for all $x$. For a type $\theta$, real number $k$, and outcome $x\in\RR^n$ such that $u(\theta, x) = k$, let $H_k^{\theta}(x)$ denote the supporting hyperplane of the upper contour set $\{y: u(\theta, y)\ge k\}$ at $x$.

Suppose that the following uniqueness requirement holds: for every pair of types $\theta\neq\theta'$, real number $k$, and outcome $x\in\RR^n$ such that $u(\theta, x) = k$, if $k'$ is such that $u(\theta', x) = k'$, it holds that $H_k^{\theta}(x) \neq H_{k'}^{\theta'}(x)$. We call this property \emph{hyperplane uniqueness}.\footnote{Hyperplane uniqueness is reminiscent of the single-crossing property in mechanism design.} Then, the argument for IC learnability in the case of linear preferences can be adapted to this setting as well. Though the assignment function $s$ we construct may not necessarily be effective, we show that exhaustively searching over a sufficiently fine cover is nevertheless incentive compatible.

\begin{thmn}[\ref{thm:strictly_convex}]
Let $\Theta$ be a type space such that the preferences induced by each $\theta\in\Theta$ are continuous, strictly convex, and satisfy hyperplane uniqueness. Then, there exists a metric on $\Theta$ with respect to which $\Theta$ is $(\varepsilon, 0, 0, 0)$-IC learnable.
\end{thmn}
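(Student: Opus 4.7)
The plan is to mimic the construction given above for linear preferences, using strict convexity and hyperplane uniqueness to build an injective assignment map $s:\Theta\to\mathcal{O}$, pulling back the Euclidean metric through $s$ to define the metric on $\Theta$, and then running the exhaustive-search template already introduced in this subsection.

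Fix a compact strictly convex reference body $B\subset\RR^n$ (for concreteness, the closed unit ball) and set $s(\theta):=\argmax_{x\in B}u(\theta,x)$. Strict quasi-concavity of $u(\theta,\cdot)$, which follows from strict convexity of the upper contour sets, makes this maximizer unique; under a mild local non-satiation hypothesis it lies on the boundary of $B$, so the tangent hyperplane to $B$ at $s(\theta)$ is well defined and supports the upper contour set of $\theta$ at $s(\theta)$. If $s(\theta)=s(\theta')=x$ with $\theta\ne\theta'$, that same tangent hyperplane would support the upper contour sets of both $\theta$ and $\theta'$ at $x$, contradicting hyperplane uniqueness. Hence $s$ is injective, and $d(\theta,\theta'):=\|s(\theta)-s(\theta')\|$ is a bona fide bounded metric on $\Theta$. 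Given target accuracy $\varepsilon$ and a finer scale $\varepsilon'\le\varepsilon$ (to be chosen), take an $\varepsilon'$-cover $\{\theta_1,\ldots,\theta_N\}$ of $\Theta$ under $d$ and run the exhaustive-search algorithm above, paying the agent $s(\theta^h)$ at the end.

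Incentive compatibility with $\tau=\nu=0$ is immediate and deterministic: for any strategy, the eventual payoff equals the $u(\theta,\cdot)$-maximum among the outcomes the agent reports preferring during the search, so truthfully answering each pairwise comparison is weakly dominant. The real work is accuracy, since $s$ need not be effective: the cover point $\theta_i$ whose $s(\theta_i)$ is most preferred by a truthful agent need not be the $\theta_i$ closest to $\theta$ in $d$. I handle this with a margin-plus-continuity argument. Because $s(\theta)$ is the \emph{unique} maximizer of $u(\theta,\cdot)$ over the compact $B$, for each $\delta>0$ there exists a margin $\eta(\theta,\delta)>0$ with $u(\theta,s(\theta))-u(\theta,y)\ge\eta$ whenever $y\in B$ and $\|y-s(\theta)\|\ge\delta$; continuity of $u(\theta,\cdot)$ also furnishes a modulus $\omega_\theta$ with $u(\theta,s(\theta))-u(\theta,s(\theta_{i^\ast}))\le\omega_\theta(\varepsilon')$ for the cover point $\theta_{i^\ast}$ nearest $\theta$. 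Choosing $\varepsilon'$ so that $\omega_\theta(\varepsilon')<\eta(\theta,\varepsilon)$ and using $u(\theta,s(\theta^h))\ge u(\theta,s(\theta_{i^\ast}))$ forces $\|s(\theta^h)-s(\theta)\|<\varepsilon$, i.e.\ $d(\theta,\theta^h)<\varepsilon$. The delicate point is to upgrade $\eta$ and $\omega_\theta$ to be uniform in $\theta$ so that a single cover suffices for every truthful type; this follows from joint continuity of $(\theta,x)\mapsto u(\theta,x)$ on a compact $\Theta\times B$, a natural regularity hypothesis implicit in the theorem's setup.
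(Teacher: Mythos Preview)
Your proof is correct and takes essentially the same approach as the paper's: both define $s(\theta)$ as the maximizer of $u(\theta,\cdot)$ over the unit ball/sphere, invoke hyperplane uniqueness for injectivity of $s$, pull back the Euclidean metric through $s$, and then argue via a margin/neighborhood argument that exhaustive search over a sufficiently fine cover is both $\varepsilon$-accurate and $(0,0)$-IC. The differences are cosmetic (closed ball plus local non-satiation versus the sphere directly; your explicit modulus-of-continuity $\omega_\theta$ and margin $\eta(\theta,\delta)$ versus the paper's shrinking neighborhoods $N^\theta_{k'}$), and you are if anything more careful than the paper about flagging the uniformity-in-$\theta$ step and the regularity hypotheses it requires.
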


\begin{proof}
For each $\theta$, let $s(\theta)$ be the unique maximizer of $u(\theta,x)$ over the unit $(n-1)$-sphere $S^{n-1}$; uniqueness follows from the strict convexity of preferences. Note that $S^{n-1}$ and  $C = \{y : u(\theta, y)\ge k\}$, with $k=u(\theta,s(\theta))$, are on differing sides of the supporting hyperplane of $C$ at $s(\theta)$.  Hyperplane uniqueness ensures that $s$ is one-to-one. Let $d$ be the metric where $d(\theta, \theta')$ is the Euclidean distance between $s(\theta)$ and $s(\theta')$.

Let $C^{\theta}_{k'} = \{y : u(\theta, y) > k'\}$ with $k'<k$ and let $N^{\theta}_{k'} = C_{k'}\cap S^{n-1}$. Clearly the diameter of $N^{\theta}_{k'}$ converges to $0$ as $k'\uparrow k$. Therefore, for each $\theta$ and $\varepsilon > 0$, there is an open neighborhood $N^{\theta}_{\varepsilon}\subset B_{\varepsilon}(s(\theta))$ of $s(\theta)$ such that $u(\theta, x)> u(\theta, y)$ for all $x\in N^{\theta}_{\varepsilon}$ and all $y\in S^{n-1}\setminus N^{\theta}_{\varepsilon}$ (where $N^{\theta}_{\varepsilon}$ is of the form $N^{\theta}_{k'}$, for $k'$ sufficiently close to $k$). \footnote{The neighborhoods and balls are with respect to the subspace topology on $S^{n-1}$.}

Now, fix the learning parameter $\varepsilon$, and let $\eta > 0$ be sufficiently small such that if $K$ is an $\eta$-cover of $S^{n-1}$, $K\cap N^{\theta}_{\varepsilon}\neq\emptyset$ for all $\theta$. Then, any $x\in K\cap N_{\varepsilon}^{\theta}$ satisfies $u(\theta, x) > u(\theta, y)$, for all $y\in S^{n-1}\setminus B_{\varepsilon}(s(\theta))$. Thus, the most preferred point of an agent of type $\theta$ is contained in $K\cap N_{\varepsilon}^{\theta}\subset B_{\varepsilon}(s(\theta))$, and so exhaustively searching over this $\eta$-cover is an $\varepsilon$-learning algorithm with respect to $d$ that is incentive compatible. \end{proof}

It is an interesting question to see what structural conditions one can impose on the type space, the outcome space, etc. to write down better learning mechanisms. For example, one might hope to achieve a learning complexity that is logarithmic in the size of the cover $N_{\varepsilon}(\Theta)$. As we will see in the case of expected utility, naive learning algorithms achieve this sample complexity, but fail to be incentive compatible. More generally one can ask if there is a combinatorial complexity measure (such as VC dimension in the case of PAC learning) that characterizes the complexity of incentive compatible learning. 

\subsection{The expected utility model of choice under uncertainty}\label{sec:eumodel}

We now turn to the case of belief elicitation for an expected utility agent. Belief elicitation has a long history in experimental economics, and in the theoretical literature on scoring rules (e.g.\  \cite{chambers2017dynamic}; see \cite{conitzer2009prediction} for a survey). A major difference with the theory of scoring rules is that we shall take as given a protocol that is based on pairwise comparisons among uncertain prospects.\footnote{This follows experimental practice, as well as the standard model of choice under uncertainty; starting from von-Neumann and Morgenstern \cite{vonneumann2007theory} and Savage \cite{savage1972foundations}. In the scoring rule model, subjects are asked to report beliefs rather than carrying out a sequence of binary choices. In any case we shall use scoring rules in our solution, just not by asking subjects to report their beliefs.} The case of passive learning was studied in \cite{basu2018learnability}.

There are $n$ states of the world, indexed by $i = 1,\ldots, n$. An agent has a subjective belief $\alpha\in \Delta_n$, where $\alpha_i$ is the probability the agent assigns to state $i$ occurring. The agent evaluates the payoff of a vector of rewards $x\in\RR^n$ by computing expectation according to $\alpha$. An agent's belief $\alpha$ defines a preference relation $\succsim\subseteq\RR^n\times\RR^n$, where $$x\succsim y\iff \alpha.x\ge\alpha.y.$$

An analyst would like to learn $\alpha$ by asking the agent to make several choices between vectors of rewards.  The analyst presents the agent with a sequence of pairs $(x,y)$ and if the agent chooses $x$ she infers that $(x-y).\alpha\geq 0$. So the problem is related to that of learning half spaces, but with the added complication of having to respect incentive compatibility. An important assumption is that the analyst is able to simulate the states of the world and observe a state according to the ``ground truth'' process governing the states (so for example if the states were ``rain'', ``snow'', and ``shine'', the analyst could simply observe the weather on the given day).

Using the notation of the previous section, $\Theta = \Delta_n$, $\mathcal{O} = \RR^n$, and $u(\alpha, x) = \E_{i\sim\alpha}[x] = \alpha.x$.

In the context of learning the agent's true belief, the analyst uses total variation distance $||\alpha-\beta||_{TV} = \frac{1}{2}\textstyle\sum_{i=1}^n|\alpha_i - \beta_i|$ to measure accuracy/error.

\subsubsection{Naive algorithms are not incentive compatible}

First, to illustrate the restrictions of our definitions, we write down a naive algorithm for eliciting $\alpha$ that achieves a good query complexity, but is not incentive compatible.

Consider a mechanism that tries to elicit each $\alpha_i$ by performing a search (sequential or binary) on each state. That is, for each state $i$, the algorithm makes queries $(e_i, c_i\vec{1})$, varying $c_i$ over a $\tfrac{2\varepsilon}{n}$-cover of $[0, 1]$ to find the indifference points, which reveal $\alpha_i$ to within an error of $\tfrac{2\varepsilon}{n}$. So, for example, a binary search uses $O(n\log\frac{n}{\varepsilon})$ questions to arrive at a hypothesis within total variation distance $\varepsilon$ from $\alpha$. Note that a $\tfrac{2\varepsilon}{n}$-cover of the simplex $\Delta_n$ with respect to total variation distance contains $O((n/\varepsilon)^n)$ elements, so performing a state-wise binary search exponentially improves upon a search over the entire cover.

However, incentive compatibility is broken rather easily, since the agent has a great deal of control over what questions the agent asks (in a similar manner to the situation in MPL). Consider the following simple example: suppose the analyst fixes a discretization of $[0, 1]$ with sure amounts $x_1,\ldots, x_7$, as in the binary search MPL example from Section~\ref{sec:MPL}, and suppose an agent has a true belief $\alpha$, with $\alpha_n\le x_6$. If, instead of $\alpha$, the agent reports an $\alpha'$ with $\alpha'_n\in (x_6, x_7)$, the final question he would get asked would be $(e_n, x_7\vec{1})$. The agent would prefer $x_7\vec{1}$, and thus would get paid off a sure amount of $x_7$. It is clear that truthfully reporting yields a strictly lower payoff than the misrepresentation. Notice that this situation is even worse than that of MPL, since if the binary search ends on state $n$, then regardless of the probabilities an agent assigns to states $1,\ldots, n-1$, he will want to answer questions as if he assigns most weight to state $n$ -- so there is no hope of backing-out an agent's true belief using this kind of scheme.

A strategic agent can easily outwit minor modifications to this scheme: for example if the analyst does the binary searches in a random order over the states, the agent can adaptively report a belief that assigns most weight to the last state over which the analyst performs a binary search. 

\subsubsection{A mechanism based on scoring rules.}

In this section we present an IC learning algorithm with IC learning complexity $$O\left(n^{3/2}\log n\max\left(\log\frac{n}{\varepsilon}, \log\frac{1}{\tau}\right)\right).$$ The algorithm is based on ideas from active learning, and specifically leverages convergence bounds on so-called disagreement based methods. Let $||\cdot||$ denote the $L^2$ norm, let $S^{n-1}$ denote the unit $(n-1)$-sphere, and let $\rho:\RR^n\to S^{n-1}$ denote the projection map onto the unit sphere defined by $\rho(\alpha) = \frac{\alpha}{||\alpha||}$. 

We now present an incentive compatible learning algorithm that we henceforth refer to as $\mathcal{A}$.

\begin{enumerate}
    \item Initialize $\mathcal{H}^0\leftarrow \Delta_n$.
    \item For $t = 1$ to $T$:
    \begin{enumerate}
        \item Choose $v$ uniformly at random from $S^{n-1}$. If the hyperplane $\{x:v.x = 0\}$ does not intersect $\mathcal{H}^{t-1}$, resample.
        \item Let $\beta^1, \beta^2$ be any elements of $\mathcal{H}^{t-1}$ such that $\rho(\beta^1) - \rho(\beta^2)$ is a scalar multiple of $v$.
        \item Query oracle on pair $(x_t = \rho(\beta^1), y_t = \rho(\beta^2))$.
        \item $\mathcal{H}^{t}\leftarrow\mathcal{H}^{t-1}\cap \{\beta\in\Delta_n : \beta \text{ is consistent with label on } (x_t, y_t)\}$
    \end{enumerate}
    \item Output any $\beta^h\in \mathcal{H}^T$.
    \item Pay the agent off based on preference from $(x_T, y_T)$. If $z_T$ is the preferred vector, simulate states of the world, and pay $(z_T)_i$ if state $i$ occurs.
\end{enumerate}

Before analyzing the algorithm, let us briefly remark that the analyst can always find $\beta^1$, $\beta^2$ satisfying the required conditions to query the agent. Let normal vector $v\in S^{n-1}$ define a hyperplane $v.x = 0$ that cuts through the projection $\rho(\mathcal{H})\subset S^{n-1}$ of the current hypothesis set onto the unit sphere. Let $w$ be a point in the interior of $\rho(\mathcal{H})$ such that $v.w = 0$.\footnote{The interior of $\rho(\mathcal{H})$ can be written as $\rho(\{\beta\in\Delta_n : v_1.\beta > 0, \ldots, v_T.\beta > 0\})$ for some $v_1,\ldots, v_T$, which is a non-empty intersection of open half-spaces as the agent's responses are required to be consistent.} We can find an open ball $B(w, r)$ (with respect to the subspace topology on $S^{n-1}$ induced by $\RR^n$) of radius $r$ centered at $w$ such that $B(w, r)\subset \rho(\mathcal{H})$. Then, take a point $x\in B(w, r)$ in the positive $v$ direction from $w$ and $y\in B(w, r)$ in the negative $v$ direction from $w$ such that $||x-w|| = ||y-w||$. Then, $x - y = v||x-y||$. 

Choosing $\beta^1$ and $\beta^2$ in this manner has no effect on the analysis of the learning rate, but is the main ingredient in achieving incentive compatibility. The learning guarantees we obtain are due to standard bounds on the label complexity of disagreement based active learning.

\begin{theorem}\label{thm:eu_learning}
$\mathcal{A}$ is a learning algorithm of query complexity $O(n^{3/2}\log n\log\frac{n}{\varepsilon})$ with respect to total variation distance. 
\end{theorem}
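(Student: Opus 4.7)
The plan is to recognize $\mathcal{A}$ as a disagreement-based active learning procedure for homogeneous halfspaces and then to invoke standard bounds. First I would reduce the problem: when the oracle $\hat{\alpha}$ is truthful, the query $(x_t, y_t) = (\rho(\beta^1), \rho(\beta^2))$ with $x_t - y_t = cv$ for some $c > 0$ reveals exactly $\operatorname{sgn}(\alpha\cdot v)$, since $x_t\succsim y_t \iff \alpha\cdot(x_t - y_t)\ge 0 \iff \alpha\cdot v\ge 0$. Thus $\mathcal{A}$ is equivalent to an active learner for the class of homogeneous halfspaces $\{w\mapsto \operatorname{sgn}(\alpha\cdot w): \alpha\in\Delta_n\}$ whose queries are sign-of-inner-product with random unit vectors $v$.

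Second, I would verify that the rejection step matches the standard disagreement-based sampling scheme. A vector $v\in S^{n-1}$ lies in the disagreement region of the current version space $\mathcal{H}^{t-1}$ exactly when there exist $\alpha_1, \alpha_2 \in \mathcal{H}^{t-1}$ on opposite sides of the hyperplane $\{w: v\cdot w = 0\}$, which (by connectedness of $\mathcal{H}^{t-1}$, an intersection of halfspaces with $\Delta_n$) is equivalent to the hyperplane intersecting $\mathcal{H}^{t-1}$. So step (a) samples $v$ uniformly from the disagreement region of $\mathcal{H}^{t-1}$, making $\mathcal{A}$ a version of the CAL algorithm for homogeneous halfspaces under uniform distribution on $S^{n-1}$. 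I would then invoke the standard label complexity bound: the VC dimension of homogeneous halfspaces in $\mathbb{R}^n$ is $n$, the disagreement coefficient under the uniform distribution on $S^{n-1}$ is $O(\sqrt{n})$ (Hanneke), and consequently after $T = O(n^{3/2}\log n \log(1/\varepsilon'))$ queries the version space $\mathcal{H}^T$, projected to $S^{n-1}$, has diameter at most $\varepsilon'$ with high probability.

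Third, I would convert this angular diameter bound into a total variation bound on $\mathcal{H}^T\subset\Delta_n$. The key calculation is that for any $\alpha, \beta\in\Delta_n$, if $u = \rho(\alpha), v = \rho(\beta)$ then $\alpha = u/\|u\|_1$ and $\beta = v/\|v\|_1$, so
\[
\|\alpha - \beta\|_2 \le \|u - v\|_2 + \bigl|\tfrac{1}{\|u\|_1} - \tfrac{1}{\|v\|_1}\bigr| \le \|u - v\|_2 + \|u - v\|_1 \le (1+\sqrt{n})\|u - v\|_2,
\]
using $\|u\|_1, \|v\|_1 \ge 1$ for nonnegative unit $\ell^2$-vectors. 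Then $\|\alpha - \beta\|_{TV} = \tfrac{1}{2}\|\alpha-\beta\|_1 \le \tfrac{\sqrt{n}}{2}\|\alpha-\beta\|_2 \le O(n)\|u - v\|_2$. Setting $\varepsilon' = \Theta(\varepsilon/n)$ therefore ensures that every $\beta^h\in\mathcal{H}^T$ satisfies $\|\alpha - \beta^h\|_{TV}\le\varepsilon$, and the total query count becomes $O(n^{3/2}\log n\log(n/\varepsilon))$ as claimed.

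The main obstacle is not any single hard step but rather carefully justifying the appeal to the disagreement-coefficient machinery in our exact setting: specifically, one must check that sampling $v$ by rejection from $S^{n-1}$ (rather than drawing an unlabeled example and querying only if it falls in the disagreement region) gives the same label complexity, and that the realized query $(x_t,y_t) = (\rho(\beta^1),\rho(\beta^2))$ can indeed always be constructed (which the paragraph preceding the theorem already handles). The metric conversion is straightforward modulo the calculation above, and the angular diameter bound follows directly from the classical active-learning analysis of homogeneous halfspaces under the uniform distribution.
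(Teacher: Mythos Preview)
Your proposal is correct and follows essentially the same approach as the paper: reduce each query to learning $\operatorname{sgn}(v\cdot\alpha)$, invoke the disagreement-coefficient and VC-dimension bounds for homogeneous halfspaces under the uniform distribution on $S^{n-1}$, and then convert the resulting angular error to a total variation bound by taking $\varepsilon' = \Theta(\varepsilon/n)$. Your metric conversion in the third step is slightly more explicit than the paper's (which passes from $\|\rho(\alpha)-\rho(\alpha^h)\|\le 2\varepsilon/n$ to $\|\alpha-\alpha^h\|\le 2\varepsilon/n$ without treating the $L^1$-normalization), but the overall structure is identical.
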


\begin{proof}Suppose $\mathcal{A}$ receives as input an oracle $\hat{\alpha}$. If on a given round we sample a normal vector $v$ and correspondingly query points $(x_t, y_t)$, the truthful agent's/oracle's preference from $(x_t, y_t)$ precisely reveals $\sgn(v.\alpha)$ -- this is simply because $x_t-y_t$ and $v$ determine the same hyperplane.

The VC dimension of the expected utility model is linear (Theorem 2 of \cite{basu2018learnability}), and the disagreement coefficient of the class of homogeneous linear separators with respect to the uniform distribution over normal vectors is bounded above by $\pi\sqrt{n}$ (Theorem 1 of \cite{hanneke2007bound}). Standard convergence results in active learning (see, e.g., \cite{dasgupta2011two}) then imply that with $O(n^{3/2}\log n\log\frac{1}{\eta})$ queries, it holds with high probability that $\err_{\alpha}(\alpha^h)\le \eta$ for all $\alpha^h$ in the final hypothesis set, where $$\err_{\alpha}(\beta) = \Pr_{v\sim S^{n-1}}[\sgn(v.\alpha)\neq\sgn(v.\beta)] = \frac{\arccos(\rho(\alpha).\rho(\beta))}{\pi}.$$

For $\varepsilon > 0$, let $\eta = \frac{2\varepsilon}{\pi n} < \frac{1}{\pi}\arccos\left(1-\frac{2\varepsilon^2}{n^2}\right)$, so $\cos(\pi\eta) > 1-\frac{2\varepsilon^2}{n^2}$.

Running $\mathcal{A}$ for $O(n^{3/2}\log n\log\frac{n}{\varepsilon})$ rounds yields that for any hypothesis $\alpha^h\in \mathcal{H}^T$, $$\begin{aligned}||\rho(\alpha) - \rho(\alpha^h)|| &= \sqrt{(\rho(\alpha) - \rho(\alpha^h)).(\rho(\alpha) - \rho(\alpha^h))} \\ &= \sqrt{2-2\rho(\alpha).\rho(\alpha^h)}\\ &\le\sqrt{2-2\cos(\pi\eta)},\end{aligned}$$ which is at most $2\varepsilon/n$ (where the final inequality is with high probability over the execution of $\mathcal{A}$). Thus $||\alpha-\alpha^h||\le\frac{2\varepsilon}{n}$, and so $||\alpha-\alpha^h||_{TV}\le\varepsilon$. \footnote{$||\alpha-\alpha^h|| \le \frac{2\varepsilon}{n}\implies \sum_{i=1}^n (\alpha_i - \alpha_i^h)^2\le \frac{4\varepsilon^2}{n^2}$, so $(\alpha_i-\alpha_i^h)\le\frac{2\varepsilon}{n}$ for each $i$, which implies that $||\alpha-\alpha^h||_{TV}\le\varepsilon$.}
\end{proof}

We now analyze incentive compatibility properties of the algorithm. The main ingredient is in using the mapping $(\alpha, i)\mapsto \rho(\alpha)_i = \frac{\alpha_i}{||\alpha||}$ to choose what questions to ask. This mapping is known as the \emph{spherical scoring rule}, and incentivizes truthful forecasts, in the sense that $\alpha = \argmax_{\beta}\E_{i\sim\alpha}[\rho(\beta)_i]$. The spherical scoring rule satisfies the geometric property that $\E_{i\sim\alpha}[\rho(\beta)_i] = ||\alpha||\cos(\alpha, \beta)$, where $\cos(\alpha, \beta)$ is the cosine of the angle formed by vectors $\alpha, \beta$. Moreover, the spherical scoring rule is \emph{effective} with respect to the renormalized $L^2$ metric, i.e. $\E_{i\sim\alpha}[\rho(\beta)_i] > \E_{i\sim\alpha}[\rho(\beta')_i]$ if and only if $||\rho(\alpha) - \rho(\beta)|| < ||\rho(\alpha) - \rho(\beta')||$. Note that the spherical scoring rule plays the role of the assignment function $s$ in the more general preference framework.

We use the following straightforward observation bounding the deviation from the maximum possible payoff in terms of the renormalized $L^2$ distance from the true type. 

\begin{lemma}\label{lem:scoring_bound}
Let $||\rho(\alpha)-\rho(\alpha')||\le\lambda$. Then 
$$\E_{i\sim\alpha}[\rho(\alpha')_i]\ge\E_{i\sim\alpha}[\rho(\alpha)_i] - \frac{1}{2}\lambda^2.$$
\end{lemma}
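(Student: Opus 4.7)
My plan is to prove this by a direct computation that rewrites both expectations in terms of the inner product $\rho(\alpha)\cdot\rho(\alpha')$ and then uses the standard identity relating squared chord length on the unit sphere to the cosine of the angle.

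First I would unpack the expectations. Since $\rho(\beta)_i = \beta_i/\|\beta\|$ (using $\|\cdot\|$ for the $L^2$ norm), we have
\[
\E_{i\sim\alpha}[\rho(\beta)_i] \;=\; \sum_i \alpha_i \rho(\beta)_i \;=\; \alpha \cdot \rho(\beta) \;=\; \|\alpha\|\,\rho(\alpha)\cdot\rho(\beta),
\]
which is the geometric property of the spherical scoring rule already noted in the excerpt. In particular, $\E_{i\sim\alpha}[\rho(\alpha)_i] = \|\alpha\|$.

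Next I would use the identity $\|\rho(\alpha)-\rho(\alpha')\|^2 = 2 - 2\,\rho(\alpha)\cdot\rho(\alpha')$, valid because both vectors lie on the unit sphere. Substituting the hypothesis $\|\rho(\alpha)-\rho(\alpha')\|\le\lambda$ gives $\rho(\alpha)\cdot\rho(\alpha') \ge 1 - \tfrac{1}{2}\lambda^2$, and therefore
\[
\E_{i\sim\alpha}[\rho(\alpha')_i] \;=\; \|\alpha\|\,\rho(\alpha)\cdot\rho(\alpha') \;\ge\; \|\alpha\|\bigl(1 - \tfrac{1}{2}\lambda^2\bigr) \;=\; \E_{i\sim\alpha}[\rho(\alpha)_i] - \tfrac{1}{2}\|\alpha\|\,\lambda^2.
\]

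Finally I would observe that $\alpha\in\Delta_n$ implies $\|\alpha\|_2 \le \|\alpha\|_1 = 1$, so $\|\alpha\|\lambda^2 \le \lambda^2$ and the stated bound follows. There really is no obstacle here; the only thing to be careful about is that the factor $\|\alpha\|$ only shrinks the gap (because $\|\alpha\|\le 1$), so dropping it preserves the inequality in the desired direction. This lemma will then feed into the incentive-compatibility analysis by converting the $L^2$ closeness on the sphere that the active-learning step guarantees into a payoff gap bound against any strategic report.
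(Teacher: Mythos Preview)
Your proof is correct and follows essentially the same route as the paper: both compute $\E_{i\sim\alpha}[\rho(\alpha')_i]=\|\alpha\|\rho(\alpha)\cdot\rho(\alpha')$, apply the chord-length identity $\|\rho(\alpha)-\rho(\alpha')\|^2=2-2\rho(\alpha)\cdot\rho(\alpha')$, and then use $\|\alpha\|\le 1$ to absorb the factor in front of $\lambda^2$. Your write-up is actually a bit more explicit about that final step than the paper's, which simply jumps to the bound.
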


\begin{proof}
We can write $||\rho(\alpha) - \rho(\alpha')||^2 = 2(1-\cos(\alpha, \alpha'))$, so $$
\E_{i\sim\alpha}[\rho(\alpha')_i] =  ||\alpha||\cos(\alpha, \alpha') 
= ||\alpha||\left(1-\frac{1}{2}||\rho(\alpha) - \rho(\alpha') ||^2\right) 
\ge\E_{i\sim\alpha}[\rho(\alpha)_i] - \frac{1}{2}\lambda^2.$$\end{proof}

\begin{thmn}[\ref{thm:eu}] The IC learning complexity of $\mathcal{A}$ is $O\left(n^{3/2}\log n\max\left(\log\frac{n}{\varepsilon}, \log\frac{1}{\tau}\right)\right)$.\end{thmn}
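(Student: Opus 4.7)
The IC learning complexity decomposes as $\max(q(\varepsilon,\delta), T(\tau,\nu))$, and Theorem~\ref{thm:eu_learning} already supplies $q(\varepsilon,\delta)=O(n^{3/2}\log n\log(n/\varepsilon))$. The plan is therefore to establish $T(\tau,\nu)=O(n^{3/2}\log n\log(1/\tau))$ and take the maximum; the two numbers combine into the claimed form automatically.

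The key structural fact I would exploit is that every query posed by $\mathcal{A}$ lies on the unit sphere: both coordinates of $(x_t,y_t)$ have the form $\rho(\beta)$ for some $\beta\in\mathcal{H}^{t-1}$. This is precisely the reason for using the spherical scoring rule. By Cauchy--Schwarz, $\alpha\cdot z\le \|\alpha\|$ for every $z\in S^{n-1}$, so regardless of the strategy $\sigma$ and regardless of how $\hat\sigma$ answers the last query, the realized payoff satisfies $u(\alpha,q_T')\le\|\alpha\|$ deterministically. This is the uniform upper bound on the strategic payoff that makes the whole argument go through; crucially it does not depend on what $\sigma$ does or on the random history it generates.

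Next I would lower-bound the truthful payoff by $\|\alpha\|-\tau$ with high probability. The plan is to re-run the argument of Theorem~\ref{thm:eu_learning}, but with an error parameter $\eta$ calibrated to $\tau$ rather than to $\varepsilon$. After $O(n^{3/2}\log n\log(1/\eta))$ rounds every $\beta\in\mathcal{H}^T$ satisfies $\err_\alpha(\beta)\le\eta$, which via $\err_\alpha(\beta)=\arccos(\rho(\alpha)\cdot\rho(\beta))/\pi$ gives $\|\rho(\alpha)-\rho(\beta)\|^2\le 2-2\cos(\pi\eta)$. Plugging this into Lemma~\ref{lem:scoring_bound} yields $\alpha\cdot\rho(\beta)\ge\|\alpha\|-(1-\cos(\pi\eta))$ for each $\beta\in\mathcal{H}^T$, and in particular for both $\beta^1,\beta^2$ of the final pair. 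Since a truthful agent selects the better element, his payoff is at least $\|\alpha\|-(1-\cos(\pi\eta))$.

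The remaining (and main) step is the calibration: choose $\eta$ so that $1-\cos(\pi\eta)\le\tau$ while $\log(1/\eta)=O(\log(1/\tau))$. Taking $\eta=\arccos(1-\tau)/\pi$ works, because for small $\tau$ this is $\Theta(\sqrt{\tau})$ and hence $\log(1/\eta)=\Theta(\log(1/\tau))$. With this choice, the manipulation gain is at most $(\|\alpha\|)-(\|\alpha\|-\tau)=\tau$ with high probability over the truthful execution, so $T(\tau,\nu)=O(n^{3/2}\log n\log(1/\tau))$, and the theorem follows. The only delicate point is the calibration of $\eta$: the $1-\cos(\pi\eta)$ slack forces $\eta$ as small as $\sqrt{\tau}$, but because it enters logarithmically this only costs a constant factor and preserves the $\log(1/\tau)$ scaling. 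Everything else reduces to invocations of already-established results (Theorem~\ref{thm:eu_learning}, Lemma~\ref{lem:scoring_bound}) plus the one-line Cauchy--Schwarz bound on strategic payoffs.
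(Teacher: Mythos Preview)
Your proposal is correct and follows essentially the same route as the paper: bound the truthful payoff below by $\|\alpha\|-\frac{1}{2}\lambda^2$ via Lemma~\ref{lem:scoring_bound} applied to the shrunken hypothesis set, bound the strategic payoff above by $\|\alpha\|$, and calibrate the active-learning error so the gap is at most $\tau$. The paper parametrizes through $\varepsilon$ (setting $\varepsilon<n\sqrt{\tau}$) while you parametrize directly through $\eta$, and you make the Cauchy--Schwarz upper bound on the strategic side explicit where the paper leaves it implicit, but the argument is the same.
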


\begin{proof} 
Suppose we run $\mathcal{A}$ for $T$ rounds to achieve $(\varepsilon, \delta)$-learning. By Theorem~\ref{thm:eu_learning}, it holds with high probability that the hypothesis set obtained will be contained inside a small ball with respect to renormalized $L^2$ distance. More precisely, if $\mathcal{A}$ is given access to oracle $\hat{\alpha}$, and $\lambda = 2\varepsilon/n$, then $$\Pr[\mathcal{H}^T(\alpha)\subseteq B(\alpha, \lambda)] \ge 1-\delta,$$ where $\mathcal{H}^T(\alpha)$ is shorthand to denote a hypothesis set drawn from an execution of $\mathcal{A}_T(\hat{\alpha})$ and $B(\alpha, \lambda) = \{\alpha' : ||\rho(\alpha) - \rho(\alpha')||\le\lambda\}$.

By Lemma \ref{lem:scoring_bound}, $$\Pr\left[\forall\alpha^h\in\mathcal{H}^T(\alpha), \E_{i\sim\alpha}[\rho(\alpha^h)_i]\ge\E_{i\sim\alpha}[\rho(\alpha)_i]-\frac{1}{2}\lambda^2\right] \ge \Pr[\mathcal{H}^T(\alpha)\subseteq B(\alpha, \lambda)]\ge 1-\delta,$$
so any strategy can yield an advantage of at most $\frac{1}{2}\lambda^2 = \frac{2\varepsilon^2}{n^2}$ over truthful reporting. For $\varepsilon < n\sqrt{\tau}$ and $\delta = \nu$ we get $(\tau, \nu)$-incentive compatibility. The IC complexity is the query complexity of $(n\sqrt{\tau}, \nu)$-learning, which is $O(n^{3/2}\log n\log\frac{1}{\tau})$.

Thus, the number of queries required to simultaneously achieve $(\varepsilon, \delta)$-learning and $(\tau, \nu)$-incentive compatibility is $O\left(n^{3/2}\log n\max\left(\log\frac{n}{\varepsilon}, \log\frac{1}{\tau}\right)\right)$.\end{proof}

Our notion of incentive compatibility is approximate, and allows for small gains to the agent from misrepresenting their beliefs. We now demonstrate that, even though Theorem \ref{thm:eu} allows for the possibility of gaining some advantage by playing strategically, we can ensure that with high probability any type learned by the analyst as a result of a strategic interaction will be sufficiently close to the true type that the analyst accurately learns the true type nonetheless.

Suppose the analyst wants to achieve $\varepsilon$ learning accuracy, and additionally wants to guarantee that with probability at least $1-\delta$, any best-responding agent will report a belief, or type, that is within $\varepsilon$ total variation distance to the true type (note that this is a slightly different notion of incentive compatibility than the previous one). As usual, let $\alpha$ denote the agent's true type. Let $\varepsilon_0 < \frac{\varepsilon}{3}$, $\lambda = \frac{2\varepsilon_0}{n}$, $\delta_0 < 1-\sqrt{1-\delta}$, and run the algorithm to achieve $(\varepsilon_0, \delta_0)$-learning. 

We first show that any misreport that is sufficiently far from the true type yields, with high probability, a strictly worse payoff than truthful reporting. Recall that $B(\alpha, \lambda)$ denotes the closed ball of radius $\lambda$ centered at $\alpha$ with respect to the renormalized $L^2$ distance.

Suppose that $||\rho(\alpha) - \rho(\beta)|| > 2\lambda$, so that $B(\alpha, \lambda)\cap B(\beta, \lambda)=\emptyset$. Then, as the spherical scoring rule is effective with respect to renormalized $L^2$-distance, $$\begin{aligned}
\Pr &\left[\forall \alpha^h\in\mathcal{H}^T(\alpha), \forall \beta^h\in\mathcal{H}^T(\beta), \E_{i\sim\alpha}[\rho(\alpha^h)_i] > \E_{i\sim\alpha}[\rho(\beta^h)_i]\right] \\
&\ge \Pr[\mathcal{H}^T(\alpha)\subseteq B(\alpha, \lambda) \wedge \mathcal{H}^T(\beta)\subseteq B(\beta, \lambda)] \\ &\ge (1-\delta_0)^2 \\ &\ge 1-\delta,
\end{aligned}$$ so it holds with high probability that any such misreport yields a strictly worse payoff. 

The remaining misreports are sufficiently close to the true type such that the analyst does not care if these allow the agent to increase his payoff. Indeed, if $||\rho(\alpha)-\rho(\beta)||\le 2\lambda$, since $\mathcal{H}^T(\alpha)\subseteq B(\alpha, \lambda)$ and $\mathcal{H}^T(\beta)\subseteq B(\beta, \lambda)$ with high probability, the triangle inequality yields $$||\rho(\alpha) - \rho(\beta^h)||\le ||\rho(\alpha) - \rho(\beta)|| + ||\rho(\beta) - \rho(\beta^h)|| \le 3\lambda,$$ so $||\alpha-\beta^h||_{TV}\le 3\varepsilon_0 < \varepsilon$. Thus with probability at least $1-\delta$, the analyst will learn a $\beta^h$ such that $||\alpha-\beta^h||_{TV}\le\varepsilon$ for any such misreport.

To summarize, the algorithm can be run for $O(n^{3/2}\log n\log\frac{n}{\varepsilon})$ rounds (the exact number of rounds would just be a small constant factor more than that required by the vanilla learning requirement) such that regardless of what strategy an agent may use to best respond during the interaction, with high probability the analyst will end up accurately learning the agents true type. 

\section{Concluding remarks}
We have analyzed the incentive compatibility of active learning using data labeled by human subjects. Our results are directly applicable to the adaptive design of economic experiments that seek to estimate subjects' preference parameters. Our paper has discussed some of the leading areas of economic experimentation: estimation of risk aversion from multiple price lists, and belief elicitation using convex budgets and scoring rules. We highlight some challenges in making active learning compatible with incentives, but for the most part we offer satisfactory algorithmic solutions to the specific areas of experimentation we have focused on. 

There are, of course, many other areas of application of active learning, and incentive issues will be important as long as the required training data is labeled by human subjects under incentivized conditions. To this end, we have introduced a general model of active learning under incentives: we believe that we are the first to do so, and we expect our findings to motivate additional investigations of the problems at the intersection of learning and incentives.

\bibliographystyle{alpha}
\bibliography{arxiv-version}

\end{document}